\newcommand{\Adv}{\mathcal{A}}
\newcommand{\Clg}{\mathcal{C}}
\newcommand{\GC}{\mathcal{GC}}
\newcommand{\SF}{\mathcal{SF}}
\newcommand{\IS}{\mathcal{IS}}
\newcommand{\ISC}{\mathcal{ISC}}
\newcommand{\Leakage}{\mathcal{L}}
\newcommand{\OXT}{\mathrm{OXT}}
\newcommand{\PRF}{\mathrm{PRF}}
\newcommand{\DB}{\mathrm{DB}}
\newcommand{\EDB}{\mathrm{EDB}}
\newcommand{\TSet}{\mathrm{TSet}}
\newcommand{\XSet}{\mathrm{XSet}}
\newcommand{\View}{\mathrm{VIEW}}
\newcommand{\Out}{\mathrm{OUT}}
\newcommand{\Real}{\mathrm{REAL}}
\newcommand{\Ideal}{\mathrm{IDEAL}}
\newcommand{\query}{\mathbf{q}}
\newcommand{\id}{\mathbf{id}}
\newcommand{\stag}{\mathbf{stag}}
\newcommand{\xtag}{\mathbf{xtag}}
\newcommand{\xtoken}{\mathbf{xtoken}}
\newcommand{\Sim}{\mathit{Sim}}
\newcommand{\Client}{\mathit{C}}
\newcommand{\Server}{\mathit{S}}
\newcommand{\Party}{\mathit{P}}
\newcommand{\system}{GraphSE\textsuperscript{2}}
\newtheorem{Definition}{Definition}
\newtheorem{Theorem}{Theorem}
\begin{document}
\title{\system: An Encrypted Graph Database for Privacy-Preserving Social Search}

\author{Shangqi Lai}
\additionalaffiliation{
\institution{Data61, CSIRO, Melbourne, Australia}
\country{Australia}}
\affiliation{
\institution{Monash University}
\city{Melbourne}
\country{Australia}}
\email{shangqi.lai@monash.edu}

\author{Xingliang Yuan}
\affiliation{
\institution{Monash University}
\city{Melbourne}
\country{Australia}}
\email{xingliang.yuan@monash.edu}

\author{Shi-Feng Sun}\authornotemark[1]
\affiliation{
\institution{Monash University}
\city{Melbourne}
\country{Australia}}
\email{shifeng.sun@monash.edu}

\author{Joseph K. Liu}\authornote{Corresponding author.}
\affiliation{
\institution{Monash University}
\city{Melbourne}
\country{Australia}}
\email{joseph.liu@monash.edu}

\author{Yuhong Liu}
\affiliation{
\institution{Santa Clara University}
\city{Santa Clara}
\country{U.S.}}
\email{yhliu@scu.edu}

\author{Dongxi Liu}
\affiliation{
\institution{Data61, CSIRO}
\city{Syndey}
\country{Australia}}
\email{dongxi.liu@data61.csiro.au}

\begin{abstract}

In this paper, we propose \system, an encrypted graph database for online social network services to address massive data breaches. \system~preserves the functionality of \emph{social search}, a key enabler for quality social network services, where social search queries are conducted on a large-scale social graph and meanwhile perform set and computational operations on user-generated contents. 
To enable efficient privacy-preserving social search, \system~provides an encrypted structural data model to facilitate parallel and encrypted graph data access. It is also designed to decompose complex social search queries into atomic operations and realise them via interchangeable protocols in a fast and scalable manner.  
We build \system~with various queries supported in the Facebook graph search engine and implement a full-fledged prototype. Extensive evaluations on Azure Cloud demonstrate that \system~is practical for querying a social graph with a million of users.
\end{abstract}

\begin{CCSXML}
<ccs2012>
<concept>
<concept_id>10002978.10003018.10003020</concept_id>
<concept_desc>Security and privacy~Management and querying of encrypted data</concept_desc>
<concept_significance>500</concept_significance>
</concept>
<concept>
<concept_id>10002978.10003022.10003027</concept_id>
<concept_desc>Security and privacy~Social network security and privacy</concept_desc>
<concept_significance>500</concept_significance>
</concept>
<concept>
<concept_id>10002978.10002991.10002995</concept_id>
<concept_desc>Security and privacy~Privacy-preserving protocols</concept_desc>
<concept_significance>500</concept_significance>
</concept>
</ccs2012>
\end{CCSXML}

\ccsdesc[500]{Security and privacy~Management and querying of encrypted data}
\ccsdesc[500]{Security and privacy~Social network security and privacy}
\ccsdesc[500]{Security and privacy~Privacy-preserving protocols}
\keywords{Social Search, Graph Database, Encrypted Query Processing} 

\maketitle

\section{Introduction}
Data breaches in online social networks (OSNs) affect billions of individuals and raise critical privacy concerns across the entire society~\cite{LiangLLW15,info2018breach}.  
Besides, driven by the demands on huge storage and computation resources, OSN service providers utilise public commercial clouds as their back-end data storage~\cite{AWS2018Airbnb, AWS2018PIXNET, instagram2011power}, which further broadens the attack plane~\cite{ren2012security}. 
Therefore, there is an urgent call to improve the control of data confidentiality for cloud providers \cite{YangHL16,LiuLSLX16,LiZLQD16}, in particular for current OSN services. 
The prevailing consensus to prevent data leakage is encryption. 
However, this approach impairs the functionality of {\emph social search}, a key enabler for quality OSN services~\cite{dan2012social}.
Social search allows users to search content of interests created by their friends. Compared with traditional web search, it produces personalised search results and serves for a wide range of OSN services such as friend discovering and user targeting. 

The first task to enable privacy-preserving social search is how to scalably query over very large encrypted social graphs. 
On the one hand, a typical social graph can contain millions or even billions of users. 
On the other hand, users may generate large volume of contents which will be queried for social search related services~\cite{curtiss2013unicorn}. 
The second and more challenging task is how to realise complex social search queries in an efficient and secure manner. 
As developed in plaintext systems (e.g., Facebook's Unicorn~\cite{curtiss2013unicorn}), queries of social search contains set operations on graph-structured data, and the retrieved contents from the graph need to further be analysed (e.g., aggregation and sorting) for advanced services such as friendship-based recommendation. 

In the literature, some work~\cite{nayak2015graphsc,blanton2013data} leverages generic building blocks (e.g., garbled circuits and oblivious data structures) to devise secure computational frameworks for graph algorithms. However, those frameworks do not appear to be scalable for low latency queries over large graphs. For example, a recent garbled circuits based framework~\cite{nayak2015graphsc} takes several minutes to complete a sorting algorithm over a graph with only tens of thousands of nodes. 
Other work focuses on dedicated privacy-preserving graph algorithms, e.g., neighbour search~\cite{chase2010structured, kamara2018structured}, and shortest distance queries~\cite{xie2016practical,meng2015grecs,wu2016privacy,wang2017secgdb}. Unfortunately, the above algorithms are limited for or different from the functionality of social search queries.

\noindent{\bf Contributions.}
To bridge the gap, in this paper, we propose and implement \system, the first encrypted graph database that supports privacy-preserving social search. 
Unlike prior work which either suffers from low scalability or limited functionality, \system~enables scalable queries over very large encrypted social graphs, and preserves the rich functionality of the plaintext social search systems. Our contributions can be summarised as follows:

\begin{itemize}[leftmargin=*]

\item We propose an encrypted and distributed graph model built on social graph modelling, searchable encryption, and the data partition technique. 
It facilitates queries over encrypted graph partitions in parallel, and maintains the locality of graph data and user-generated contents for low query latency. 



\item We devise mixed yet interchangeable protocols to enable complex social search functions. The way of doing this is to decompose queries into atomic operations (i.e., set, arithmetic, and sorting operations) and then adapt suitable cryptographic primitives for efficient realisation. All these operations are tailored to be executed in parallel.

\item We realise query operators of the Facebook's social search system Unicorn~\cite{curtiss2013unicorn}, i.e., \textbf{term}, \textbf{and}, \textbf{or}, \textbf{difference}, and \textbf{apply}. We also design a query planner that can parse a query to atomic operations and initiate the corresponding primitives. 

\item We formally prove the security of our proposed query protocols under the real and ideal paradigm. Queries, graph data, and results are protected throughout the query process.  

\item We show the practicality of \system~by implementing a prototype which is readily deployable. It leverages Spark~\cite{zaharia2010spark} for setup (data partition and encryption), Redis~\cite{redis} as the storage back-end, and uses Apache Thrift~\cite{slee2007thrift} to implement the query planner and query processing logic.  

\end{itemize}

Our comprehensive evaluation on the Youtube dataset~\cite{mislove-2007-socialnetworks} with 1 million nodes confirms that all atomic operations are of practical performance. For set queries, \system~retrieves a content list with $500$ entities within $10$ ms. For an average user ($130$ friends), \system~takes at most $20$ ms if the set operation involves two indexing terms (attributes); and it takes no more than $100$ ms for five indexing terms. 
Regarding the computational operations, \system~takes $100$ ms to handle arithmetic computations over $10^4$ entities, and $450$ ms to sort $128$ entities. 
As a summary, most of the queries for an average user are processed within $1$ s, and throughput is reduced at most $49\%$ compared to the plaintext queries.

\noindent{\bf Organisation.}
The rest of this paper is structured as follows. We discuss related work in Section~\ref{sec:related}. After that, we describe the system overview in Section~\ref{sec:overview}, and present the encrypted and distributed graph data model and the design of atomic operations in Section~\ref{sec:modelling}. In Section~\ref{sec:queries}, we introduce the realisation of privacy-preserving social search queries and their security. 
Next, we describe our prototype implementation in Section~\ref{sec:impl}, and evaluate the performance in Section~\ref{sec:evaluation}. We give a conclusion in Section~\ref{sec:conclusion}.

\section{Related Work}\label{sec:related}
{\bf Privacy-preserving graph query processing.} There exist various designs that aim to answer a certain type of queries over the encrypted graph. 
%
%
Structured encryption~\cite{chase2010structured} is proposed in the framework of SSE and supports adjacency and neighbouring queries.
%
Some recent work is proposed to support privacy-preserving subgraph queries~\cite{cao2011privacy, chang2016privacy}. 
%
%
%
%
However, all the above designs enable limited query functionality.
%
%
Another line of work on privacy-preserving graph processing is to perform shortest-path queries over the encrypted graph.
Protocols for this type of queries are devised via oblivious RAM~\cite{xie2016practical}, structured encryption~\cite{meng2015grecs}, or Garbled Circuit~\cite{wu2016privacy,wang2017secgdb}. 
%
%
To implement more complicated algorithms, protocols are proposed to use secret sharing and homomorphic encryption for Breadth-first search (BFS)~\cite{blanton2013data}, PageRank~\cite{xie2014cryptgraph}, and approximate eigen-decomposition~\cite{sharma2018privategraph}. 
We stress that the above work targets on different query functionality other than social search. 
Note that a recent framework named GarphSC~\cite{nayak2015graphsc} can generate data-oblivious Garbled Circuit (GC) for graph algorithms such as PageRank and Matrix Factorisation. 
Because oblivious data structures are adapted for large graphs and all computations are realised via GC, it does not appear to achieve low latency for social search queries. 



\noindent{\bf Encrypted database system.} Our system is also related to encrypted database systems~\cite{popa2011cryptdb,pappas2014blind,poddar2016arx,papadimitriou2016big,yuan2017enckv}.
%
CryptDB~\cite{popa2011cryptdb} is the first practical encrypted database system, which is built on property-preserving encryption (PPE). It supports SQL queries over encrypted relational data records. 
BlindSeer~\cite{pappas2014blind} proposes a Bloom Filter based index and leverages GC to evaluate arbitrary boolean queries with keywords and ranges. 
Arx~\cite{poddar2016arx} follows the design of CryptDB to support SQL queries, but it uses SSE and GC to reduce the leakage from PPE.
Seabed~\cite{papadimitriou2016big} uses additively symmetric homomorphic encryption (ASHE) to perform efficient aggregation over the encrypted data, and develops a schema with padding to mitigate the inference attack~\cite{naveed2015inference}. 
EncKV~\cite{yuan2017enckv} adapts SSE and ORE schemes to design an encrypted and distributed key-value store. 
However, all the encrypted databases mentioned above are neither designed for graph data nor optimised for social search. 

\noindent{\bf Graph processing system.} In the plaintext domain, a large number of graph processing systems~\cite{low2012distributed,curtiss2013unicorn,chi2016nxgraph} (just to list a few) are proposed to support efficient large graph processing. 
%
%
However, all the above systems only support queries over the graphs in unencrypted form, which are unable to address privacy concerns of sensitive data leakage. 
Authenticated graph query~\cite{goodrich2011efficient} is proposed to verify the correctness of graph queries, which could be a complementary work to prevent attacks from malicious adversaries. 

\section{Background} \label{sec:pre}
\subsection{Social Graph Model}\label{subsec:social} 
The social graph consists of nodes (aka entities) and edges (aka relationships of entities) in social networks.
%
%
As the social graph is a sparse graph~\cite{curtiss2013unicorn}, it is normally represented via a set of adjacency lists. Like~\cite{curtiss2013unicorn}, we refer to these adjacency lists as {\it posting lists}.

Formally, the social graph is an edge-labeled and directed graph $G = (\mathrm{V}, \mathrm{E})$, where $\mathrm{V} = \{\mathrm{v}_1, \mathrm{v}_2, ...\}$ is the entity set and $\mathrm{E} = \{\mathrm{e}_1, \mathrm{e}_2, ...\}$ is the relationship set.
Each posting list contains a list of entities $\{\mathbf{v}\}$, which are (sort-key, $\id$) pairs. The sort-key is an integer that indicates the importance of the entity in a posting list, and the $\id$ is its unique identifier.

The posting lists are indexed by the inverted index, and modelled by the edges in social graph: All edges in $G$ can be represented as a triad $\mathrm{e} = (\mathsf{u}, \mathsf{v}, \textbf{edge-type})$ which consists of its egress, ingress nodes ($\mathsf{u}, \mathsf{v}\in \mathrm{V}$) plus an \textbf{edge-type} which is a string representing the relationship between nodes (e.g., {\bf friend}, {\bf like}).
The inverted indexing term $t$ is in the form of
$\textbf{edge-type}\textbf{:}\id_{\mathsf{u}}$. For example, the user may use $\textbf{friend:}{\id}_i$ to get the posting list of user $i$'s friends.

%

\subsection{Oblivious Cross-Tags ($\OXT$) Protocol}\label{subsec:oxt}
Oblivious Cross-Tags ($\OXT$) Protocol~\cite{cash2013highly} is an SSE protocol, which proceeds between client $\Client$ and server $\Server$. It provides an efficient way to perform conjunctive queries in encrypted database\footnote{The scheme proposed in \cite{kamara2017boolean} supports disjunctive queries, but it consumes large storage space.}. Here we provide a high-level description as needed for the basic operations of our proposed system.


The protocol has two types of data structures. 
Firstly, for every keyword $w$, an inverted index, referred as `$\TSet(w)$', is built to point to the set $\DB(w)$ of all entity identifiers $\id$s associating with $w$. Each $\TSet(w)$ is identified by an indexing term called $\stag(w)$, and all $\id$ values in $\TSet(w)$ are encrypted via a secret key $K_w$. Both $\stag(w)$ and $K_w$ are computed as a $\PRF$ applied to $w$ with $\Client$'s secret keys.
Another data structure called `$\XSet$' is built to hold a list of hash values $h(\id,w)$ (called '$\xtag$') over all entity identities $\id$ and keywords $w$ contained in $\id$, where $h$ is a certain (public) cryptographic hash function. The above two data structures are stored on the server-side.

To search a conjunctive query $(w_1, w_2, ..., w_n)$ with $n$ keywords, $\Client$ sends the `search token' $\stag(w_1)$ related to $w_1$ (called `s-term', we assume it to be $w_1$ in the above query) to $\Server$, which allows the server to run $\mathrm{TSet.Retrieve}(\TSet, \stag(w_1))$ and retrieve $\TSet(w_1)$ from the $\TSet$. 
In addition, $\Client$ sends `intersection tokens' $\xtoken(w_1, w_i)$ (called `xtraps') related to the $n-1$ keyword pairs $(w_1,w_i)$ consisting of the `s-term' paired with each of the remaining query keywords $w_i$, $2\leq i \leq n$ (called `x-terms'). 
The xtraps allow the server to evaluate the cryptographic hash function of pairs $(\id,w_i)$ without knowing either keyword $w_i$ or $\id$. $\mathrm{S}$ checks the existence of $h(\id,w_i)$ in $\XSet$ and filters the $\TSet(w_1)$ to $n-1$ subsets of entities that contain the pairs $(w_1,w_i)$. 
It only returns the entities that contain all $\{w_i\}_{1\leq i\leq n}$ to the client. 
$\Client$ finally uses $K_w$ to recover the $\id$s of entities.

As mentioned in~~\cite{cash2013highly}, the security of $\OXT$ parameterised by a leakage function $\Leakage_{OXT}=(N, \mathrm{\phi}, \bar{s}, \mathrm{SP}, \mathrm{XP}, \mathrm{RP}, \mathrm{IP})$.
It depicts what an adversary is allowed to learn about the database and queries via executing $\OXT$ protocol. 
Informally, considering a vector of queries $\mathsf{q}=(\mathsf{s}\wedge \mathsf{\phi}(\mathsf{x}_2, ..., \mathsf{x}_n))$, which consists of a vector of s-terms $\mathsf{s}$, a vector of boolean formulas $\mathsf{\phi}$, and a sequence of x-term vectors $\mathsf{x}_2, ..., \mathsf{x}_n$.
After executing $\mathsf{q}$ in a chosen database $\DB$, the adversary only can learn:
\begin{itemize}[leftmargin=*]
\setlength{\itemsep}{0pt}
\setlength{\parsep}{0pt}
\setlength{\parskip}{0pt}
\item $N$: The total number of $(\id,w)$ pairs.
\item $\mathsf{\phi}$: The boolean formulae that the client wishes to query.
\item $\bar{s}$: The repeat pattern in $\mathsf{s}$.
\item $\mathrm{SP}$: The size of posting lists for $\mathsf{s}$.
\item $\mathrm{XP}$: The number of x-terms for each query.
\item $\mathrm{RP}$: The set of result $\id$ matching each pair of (s-term, x-term)-conjunction which is in the form $(\mathsf{s}, \mathsf{x}_i), 2\leq i \leq n$.
\item $\mathrm{IP}$: The set of result $\id$ both existing in the posting lists of $\mathsf{s}[i]$ and $\mathsf{s}[j]$, which is only revealed when two queries $\mathsf{q}[i], \mathsf{q}[j], i\neq j$ have different s-terms but same x-terms.
\end{itemize}
%

\subsection{Secure Computation}\label{subsec:secureComp}
{\bf Additive Sharing and Multiplication Triplets.} To additively share ($Shr^A(\cdot)$) an $\ell$-bit value $a$, the first party $\Party_0$ generates $a_0\in \mathbb{Z}_{2^l}$ uniformly at random and sends $a_1=a-a_0 \mod 2^l$ to the second party $\Party_1$. 
The first party's share is denoted by $\langle a\rangle_0^A=a_0$ and the second party's is $\langle a\rangle_1^A=a_1$, the modulo operation is omitted in the description later. 
To reconstruct ($Rec^A(\cdot,\cdot)$) an additively shared value $\langle a\rangle^A$ in $\Party_i$, $\Party_{1-i}$ sends $\langle a\rangle_i^A$ to $\Party_i$ who computes $\langle a\rangle_0^A+\langle a\rangle_1^A$.
Given two shared values $\langle a\rangle^A$ and $\langle b\rangle^A$, Addition ($Add^A(\cdot,\cdot)$) is easily performed non-interactively. 
In detail, $\Party_i$ locally computes $\langle c\rangle_i^A=\langle a\rangle_i^A+\langle b\rangle_i^A$, which also can be denoted by $\langle c\rangle^A=\langle a\rangle^A + \langle b\rangle^A$. 
To multiply ($Mul^A(\cdot,\cdot)$) two shared values $\langle a\rangle^A$ and $\langle b\rangle^A$, we leverage Beaver's multiplication triplets technique~\cite{beaver1991efficient}. 
Assuming that the two parties have already precomputed and shared $\langle x\rangle^A$, $\langle y\rangle^A$ and $\langle z\rangle^A$, where $x,y$ are uniformly random values in $\mathbb{Z}_{2^l}$, and $z=xy\mod 2^l$. 
Then, $\Party_i$ computes $\langle e\rangle_i^A=\langle a\rangle_i^A-\langle x\rangle_i^A$ and $\langle f\rangle_i^A=\langle b\rangle_i^A-\langle y\rangle_i^A$. 
Both parties run $Rec^A(\langle e\rangle_0^A, \langle e\rangle_1^A)$ and $Rec^A(\langle f\rangle_0^A, \langle f\rangle_1^A)$ to get $e,f$, and $\Party_i$ lets $\langle c\rangle_i^A=i\cdot e\cdot f+f\cdot\langle x\rangle_i^A+e\cdot \langle y\rangle_i^A+\langle z\rangle_i^A$.

\noindent{\bf Garbled Circuit and Yao's Sharing} Yao's Garbled Circuit (GC) is first introduced in~\cite{yao1982protocols}, and its security model has been formalised in~\cite{bellare2012foundations}. 
GC is a generic tool to support secure two-party computation. 
The protocol is run between a ``garbler'' with a private input $x$ and an ``evaluator'' with its private input $y$. The above two parties wish to securely evaluate a function $f(x,y)$. At the end of the protocol, both parties learn the value of $z=f(x,y)$ but no party learns more than what is revealed from this output value. 
In details, the garbler runs a garbling algorithm $\GC$ to generate a garbled circuit $F$ and a decoding table $dec$ for function $f$. The garbler also encodes its input $x$ to $\hat{x}$ and sends it to the evaluator. The evaluator runs an oblivious transfer (OT)~\cite{asharov2013more} protocol with the garbler to acquire its encoded input $\hat{y}$. Finally, the evaluator can compute $\hat{z}$ from $F, \hat{x}, \hat{y}$, decode it with $dec$, and share the result $z$ with the garbler. 
The security proof against a semi-honest adversary under two-party setting is given in~\cite{lindell2009proof}.

In the following parts, we assume that $\Party_0$ is the garbler and $\Party_1$ is the evaluator.
GC can be considered as a protocol which takes as inputs the Yao's shares and produces the Yao's shares of outputs. 
In particular, the Yao's shares of 1-bit value $a$ is denoted as $\langle a\rangle_0^Y=\{K_0, K_1\}$ and $\langle a\rangle_1^Y=K_a$, where $K_0, K_1$ are the labels representing $0$ and $1$, respectively. 
The evaluator uses its shares to evaluate the circuit and gets the output shares (another labels).

Additive shares can be switched to Yao's shares efficiently. 
To be more precise, two parties secretly share their additive shares $a_0=\langle a\rangle_0^A$, $a_1=\langle a\rangle_1^A$ in bitwise via Yao's sharing. 
The evaluator then receives $\langle a_0\rangle^Y$ and $\langle a_1\rangle^Y$ and evaluates the circuit $\langle a_0\rangle^Y+\langle a_1\rangle^Y \mod \langle2^l\rangle^Y$ to get the label of $a$.

\section{System Overview}\label{sec:overview}
\begin{figure}[!t]
\centering
\includegraphics[width=\linewidth]{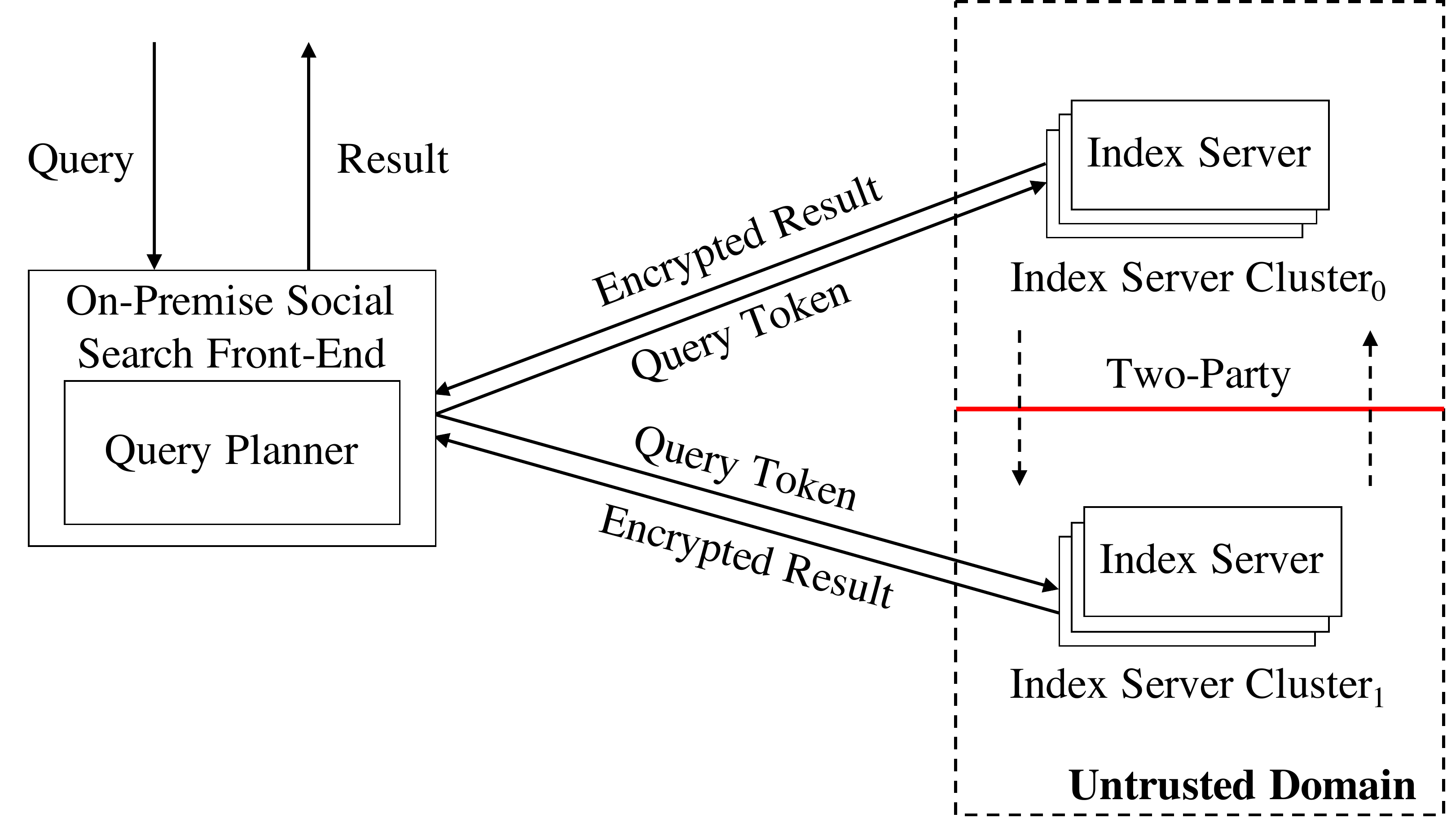}
\caption{System architecture overview.}
\vspace{-10pt}
\label{fig:overview}
\end{figure}
\subsection{System Architecture}\label{subsec:arch}
As shown in Figure~\ref{fig:overview}, \system~has two entities: the on-premise social search service front-end ($\SF$) and the index server cluster ($\ISC$) with several index servers ($\IS$s) in an untrusted cloud. 
Note that this setting is consistent with many off-the-shelf social network service providers such as Airbnb~\cite{AWS2018Airbnb} and Instagram~\cite{instagram2011power}, who use cloud data storage as the back-end to manage large graphs and massive user-generated data contents. Also, such architecture is now natively supported by public clouds, e.g., AWS Outposts~\cite{AWS2018outposts}. \system~aims to improve the protection of data confidentiality at the back-end, which is usually the high-value target for adversaries in practice. 

During the setup phase, $\SF$ partitions the social graph to disjoint subgraphs and builds two instances of SSE indexes of each subgraph for the queries on structured information. The generated indexes are uploaded to two non-colluded $\ISC$s with multiple $\IS$s respectively. 
The sort-keys are co-located with the corresponding indexes in the form of additive shares on the above two $\ISC$s for the arithmetic operations and sorting.
Specifically, each $\IS$ has one of the two additive shares, and it pairs with a counter-party in the other cluster, which maintains the same index but holds the other share.
Upon receiving a query from its users, $\SF$ uses a query planner to parse the query into atomic operations (see Section~\ref{subsec:atomic}) to generate a query plan. It then sends the query tokens of atomic operations to all $\IS$s to execute the query plan. After that, each $\IS$ requests the structured information via the tokens. Based on the matched encrypted contents, it executes arithmetic operations and scoring/ranking algorithms with its counter-party. Finally, the encrypted result is returned to $\SF$. 


%
%
In this architecture, we consider a scenario of secure computation sourcing where the in-house $\SF$ assigns the computation to the $\IS$s in two untrusted but non-colluding clusters $\ISC_0$ and $\ISC_1$. 
Such a model of secure multi-party computation is formalised in~\cite{kamara2011outsourcing} and applied in many existing studies~\cite{nikolaenko2013privacy, baldimtsi2015sorting,mohassel2017secureml}. 
Built on this model, \system~offers two advantages: (i) $\SF$ is not required to be involved with any computation after it distributes the data to the servers, and (ii) the computation process can benefit from the mixture of multi-party computation protocols that enable efficient arithmetic operation, comparison, and sorting at the same time.
Note that the communication between $\IS$s will not be the system bottleneck, because $\IS$s can be deployed in cloud clusters with dedicated datacenter networking support. 
This is consistent with prior studies based on the same architecture~\cite{mohassel2017secureml}.

\begin{table*}[!t]
	\centering
	\caption{Supported social search operators in \system~and its essential atom operations.}
	\label{tlb:atomic}
	\begin{tabular}{|c|c|c|c|c|c|}
	\hline
	\multirow{2}{*}{Query operator}&\multirow{2}{*}{Example(from~\cite{curtiss2013unicorn})} & \multicolumn{4}{ c| }{Atomic operations} \\
	\cline{3-6}
	& & \textit{Index Access} & \textit{Set Operations} & \textit{Arithmetic} & \textit{Sorting} \\
	\hline
	\textbf{term} & (\textbf{term} \textbf{friend:1}) & $\surd$ & & & $\surd$\\
	\hline 
	\textbf{and} & (\textbf{and} \textbf{friend:1} \textbf{friend:2}) & $\surd$ & $\surd$ & & $\surd$\\
	\hline 
	\textbf{or} & (\textbf{or} \textbf{friend:1} \textbf{friend:2}) & $\surd$ & $\surd$ & & $\surd$\\
	\hline 
	\textbf{difference} & (\textbf{difference} \textbf{friend:3} (\textbf{and} \textbf{friend:1} \textbf{friend:2})) & $\surd$ & $\surd$ & & $\surd$\\
	\hline
	\textbf{apply} & (\textbf{apply friend:} \textbf{friend:1}) & $\surd$ & $\surd$ &$\surd$ & $\surd$\\
	\hline
	\end{tabular}
	\vspace{-10pt}
\end{table*}

\subsection{High-level Description}
%
Before introducing the details of our system, we elaborate on the design overview and underlying design intuitions. 
To query large social graphs, \system~develops an encrypted and distributed graph model. It is built on graph modelling, searchable encryption, and the standard data partition algorithm. Each server evenly stores an encrypted disjoint part of the whole graph. Meanwhile, this model is designed to co-locate the encrypted contents with the disjoint part containing the users who generate or relate to the contents. As a result, \system~not only maximises the system scalability but also preserves data locality for low query latency.


To facilitate the realisation of various social search queries in the encrypted domain, \system~first splits these complex queries into two stages, i.e., content search over the structured social graph and computational operations on the retrieved contents. Within the above stages, queries are further decomposed into atomic operations, i.e., {\it Index Access}, {\it Set Operations}, {\it Arithmetic} operations, and {\it Sorting}. Since the first stage commonly performs set operations over the social graph, \system~realises our proposed graph model via a well-known searchable encryption scheme for boolean queries (aka OXT~\cite{cash2013highly}). 
%
%
%
%
The second stage requires a combination of different computations to further analyse user contents. For example, collaborative filtering~\cite{breese1998empirical} first obtains the scores of user contents via several addition and multiplication operations and then sorts the scores for an accurate recommendation. 

To accelerate sophisticated computations in the second stage, \system~mixes different secure computation protocols. Note that such philosophy also appears in recent privacy-preserving computation applications~\cite{demmler2015aby, mohassel2017secureml}. 
Unlike prior work, \system~customises the mixed protocols for social search queries and adapts them to our distributed graph model. 
In particular, \system~represents the importance (score) of user-generated contents as the additive shares and deploys two distributed $\OXT$ instances at two non-colluded server clusters to store both the graph partitions and corresponding shares respectively. 
Doing so allows \system~to support parallel and batch addition and multiplication without the interaction between servers\footnote{Multiplication involves a round of interaction between two servers, but they are in the same partition of two clusters.}. 
%
To achieve fast sorting, \system~first converts additive shares to Yao's shares inside garbled circuits (GC) and then invokes a tailored distributed sorting protocol via GC. Each pair of servers in two clusters can perform local sorting in parallel, and then the intermediate results are aggregated for global sorting. Within the protocol, the underlying scores are hidden against servers from either of the two parties. 

\subsection{Threat Assumptions}


In this work, we assume that $\SF$ is a private server dedicatedly maintained by the OSN service provider. It is a trustworthy party in the proposed model. 
Similar to the real-world OSN service provider (e.g. Airbnb), all users should submit their queries to $\SF$ through webpages or mobile apps.
We assume that $\SF$ utilises the secure channel and cryptographic techniques to protect users' secrets.
On the other hand, we assume that all $\IS$s are located in the untrusted domain. 
%
Meanwhile, we consider that the two clusters are semi-honest but not colluding parties. Each cluster performs social search faithfully but intends to learn additional information such as query terms, result $\id$s and ranking values from the graph. 
Besides, those clusters hold user data and perform query functions, and thus they are high-value targets of adversaries. We assume that the two clusters can be compromised by two different passive adversaries, but the two adversaries will not collude.  
\system~aims to protect the confidentiality of the private information in the social graph when the data storage back-end of the social search service is deployed at an untrusted domain.


\subsection{Query Operators}\label{sec:operators}
\system~follows a typical plaintext social search system~\cite{curtiss2013unicorn} to define the operators (see Table~\ref{tlb:atomic}).

In general, all operators in \system~aim to retrieve posting lists from the encrypted graph index. 
The simplest form of these operators is \textbf{term}, which retrieves a single posting list via an {\it Index Access} operation.
Like the other social search system, \system~also supports \textbf{and} and \textbf{or} operators, which yield the intersection and union of posting lists via {\it Set Operations} respectively. In addition, it supports \textbf{difference} operator, which yields results from the first posting list that are not present in the others.
Moreover, \system~supports the unique query operator of Unicorn system~\cite{curtiss2013unicorn}, i.e., \textbf{apply}. The operator allows \system~to perform multiple rounds of posting list retrieval to retrieve contents that are more than one edge away from the source node. 

To enable quality search services (e.g., friendship-based recommendation), the retrieved posting lists should be scored/ranked before returning to users.
As mentioned in Section~\ref{sec:overview}, the additive shares of sort-keys are stored with its indexes. As a result, most of the query operators (e.g., \textbf{term}, \textbf{and}, \textbf{difference} and \textbf{or}) can use these shares to perform {\it Sorting} on the retrieved contents.
Furthermore, it is often useful to return results in an order different from sorting by sort-keys. 
For instance, collaborative filtering~\cite{breese1998empirical} evaluates an arithmetic formula about friendships and ratings on items to produce the personalised scores for recommended items. The new score is a better prediction than the sort-keys, as the later only reflects the overall preference in the community (e.g., the hit-count on the item).
The defined operators natively support arithmetic computations via the additive shares affixed with indexes.
Specifically, \textbf{apply} operator has the capability to support the secure evaluation on complicated scoring formulas with {\it Arithmetic} operations: It can access different types of entities (e.g., user's friends, items liked by users, etc.) in a multiple round-trip query, which means it can combine the scores of different entities and cache the intermediate result for next round computations.

\begin{table}[!t]
	\centering
	\caption{Notations and Terminologies}
	\label{tb:notation}
	\tabcolsep 0.01in
	\begin{tabular}{|c|l|}
		\hline
		Notation & Meaning\\
		\hline
		\hline
		$\id$ & the unique identifier of entity\\
		\hline
		$e_{\id}$ & the encrypted entity $\id$\\
		\hline
		$t$ & an indexing term in the form of $\textbf{edge-type}\textbf{:}\id_{\mathsf{u}}$\\
		\hline
		$\DB$ & an inverted indexed database $\{(t, \{(\text{sort-key},\id)\})\}$ \\
		\hline
		$\DB(t)$ & a list of $\{(\text{sort-key},\id)\}$ indexed by $t$ \\
		\hline
		$\{\mathbf{E}\}$ & the encrypted posting list with $(\langle\text{sort-key}\rangle^A, e_{\id})$ pairs \\
		\hline
		$\Party_i$ & the $i$-th party in \system~($i\in\{0,1\}$)\\
		\hline
		$x$ & a numerical value\\
		\hline
		$\mathbf{X}$ & a matrix\\
		\hline
		$\langle x\rangle^{*}_i$ & the Additive/Yao's share of a numerical value $x$ in $\Party_i$\\
		\hline
		$\langle \mathbf{X}\rangle^{*}_i$ & the Additive/Yao's share of a matrix $\mathbf{X}$ in $\Party_i$\\
		\hline
		$\GC$ & a garbling scheme \\
		\hline
		\hline
	\end{tabular}  
	\vspace{-10pt}
\end{table}

\section{The Proposed System}\label{sec:modelling}
We give a list of needed notations in our system construction and security analysis in Table~\ref{tb:notation}. The detailed definitions of preliminaries we used are given in Section~\ref{sec:pre}.

\subsection{Encrypted Graph Data Model}
To support social search operations in~\cite{curtiss2013unicorn} on an encrypted social graph (see Section~\ref{subsec:social} for details), \system~creates the OXT index (i.e., $\TSet$ and $\XSet$, see Section~\ref{subsec:oxt} for details) for encrypted graph structure access in $\ISC$s, and the additive shares are integrated with the corresponding index to support complex computations.
Specifically, to support simple graph structure data access, each posting list is encrypted and stored as a $\TSet$ tuple in the $\ISC$:
$(\stag(\textbf{edge-type}\textbf{:}\id_{\mathsf{u}}),\{\mathbf{E}\})$.
The $\TSet$ tuple consists of the $\stag$ of indexing term as the key and the encrypted posting list $\{\mathbf{E}\}$ as the value. 
Each element in $\{\mathbf{E}\}$ is an encrypted tuple $\mathbf{E}_{\id}=(\langle\text{sort-key}\rangle^A, e_{\id})$, which keeps the encryption $e_{\id}$ of entity $\id$. 
Additionally, the sort-key of entity is shared as additive sharing value. \system~associates it with the encrypted entity $\id$ to support complex computations.
Moreover, \system~evaluates the cryptographic hash function of $(\id, \textbf{edge-type}\textbf{:}\id_{\mathsf{u}})$ pairs to generate an $\XSet$ for complex set operations.

\subsection{Encrypted and Distributed Graph Index}
In order to support the system to process the query in parallel, \system~distributes the encrypted graph across multiple index servers for each cluster.
\begin{figure}[!t]
\includegraphics[width=\linewidth]{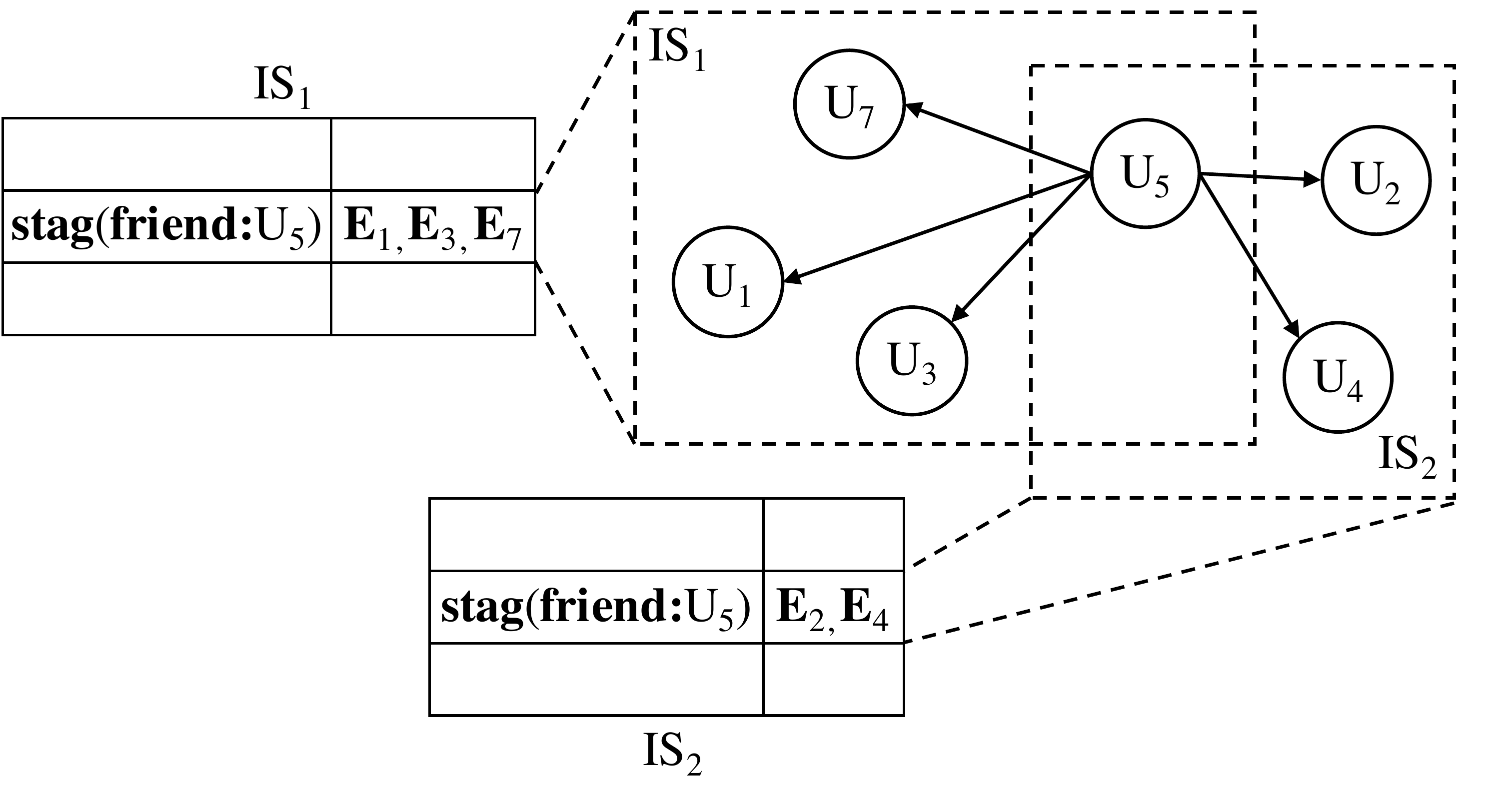}
\caption{Our encrypted and distributed data model, the arrows indicate the friend relationships between users.}
\vspace{-10pt}
\label{fig:dataModel}
\end{figure}
\system~devises a partition strategy that shards the posting lists by hashing on result $\id$.
Figure~\ref{fig:dataModel} gives an example of the proposed partition strategy in an $\ISC$ with two $\IS$s. 
We employ a {\bf modulo} partition strategy, which split the original posting list into multiple non-duplicate parts, but other graph partition strategies (e.g.,~\cite{low2012distributed}) can also be applied to shard the social graph.
The design has three advantages in the context of distributed environment. First, it maintains the availability in the event of server failure.
Furthermore, the sharding strategy enables the distributed system to finish most of the set operations and the consequent scoring, ranking and truncating in $\IS$s. It splits the computation loads into distributed servers to improve the efficiency and also cuts down the communication cost between $\IS$s and $\SF$. 
Finally, it does not affect the security of \system~because the adversary who compromises an $\ISC$ gets the same view (the whole encrypted database) as the adversary in a single $\OXT$ instance. 
If the adversary cannot access all $\IS$s in the $\ISC$, only the view on a fraction of the encrypted database is learned..

\subsection{Atomic Operations}\label{subsec:atomic}
As mentioned in Section~\ref{sec:operators}, the social search queries are implemented by a set of operators. 
We observe that these operators can be decomposed to a set of atomic operations. 
We now describe the implementation of these atomic operations in the encrypted domain. 
%
%
For each atomic operation, we explain how we adapt and optimise it in the proposed system. 

\subsubsection{Index Access} We start with {\it Index Access} operation, which is used to retrieve the neighbouring nodes of the target user with the given \textbf{edge-type} (e.g., {\bf friend}, {\bf likes}) from the social graph. 
Algorithm~\ref{alg:indexAccess} outlines the searching procedure using $\TSet$ operations. 
On receiving the search keyword, $\SF$ firstly generates a search token $\tau$, which is $\stag(t)$ of the indexing term $t$.
$\IS$ can use $\tau$ to search $\TSet$ and get the encrypted posting list $\TSet(t)$ as the return. 
{\it Index Access} operation can be easily extended to run in parallel. More specifically, $\SF$ broadcasts search token $\tau$ to all $\IS$s. After that, each $\IS$ uses $\tau$ to get its local partition of the whole encrypted posting list and sends it back.

\noindent{\bf Security.} The security of {\it Index Access} is guaranteed by the security property of $\TSet$. 
Informally, {\it Index Access} is $\Leakage_{\bf T}$-semantically-secure against adaptive attacks where $\Leakage_{\bf T}$ is the leakage function of $\TSet$. $\Leakage_{\bf T}$ is well-defined and discussed in~\cite{cash2013highly}. 
It ensures {\it Index Access} only leaks the number of edges in the encrypted social graph.

\begin{algorithm}[!t]
\caption{Index Access}
\label{alg:indexAccess}
\begin{algorithmic}[1]
\Require $\TSet$, Indexing Term $t$
\Ensure Encrypted Result $\TSet(t)$
\Function{IndexAccess}{$\TSet, t$}
	\State $\SF$ inputs indexing term $t$, and $\IS$ inputs $\TSet$;
	\State $\SF$ computes $\tau\leftarrow\stag(t)$;
	\State $\SF$ sends $\tau$ to $\IS$;
	\State $\IS$ computes $\TSet(t)\leftarrow\mathrm{TSet.Retrieve}(\TSet, \tau)$;
	\State\Return $\TSet(t)$;
\EndFunction
\end{algorithmic}
\end{algorithm}
\subsubsection{Set Operations} This operation involves the boolean expression with multiple indexing terms. \system~uses it to query the encrypted graph-structured data and finds the neighbouring nodes and the corresponding user-generated content that satisfy the given boolean expression. 
In \system, we adapt $\OXT$ protocol to support this atomic operation, but some of the other SSE protocols supporting conjunctive queries (e.g.~\cite{lai2018result}) can also be readily adapted as the building block of \system.
The $\OXT$ protocol supports conjunctive queries of the form $t_1\wedge t_2\wedge ...\wedge t_n$ natively, but it can be extended to support the boolean query of the form $t_1\wedge \phi(t_2, ..., t_n)$, where $t_1$ is the `s-term', and $\phi(t_2, ..., t_n)$ is an arbitrary boolean expression~\cite{cash2013highly}. As shown in Algorithm~\ref{alg:boolean}, the extended protocol follows the basic steps to obtain search tokens and search in $\TSet$ and $\XSet$ interactively. 
Nevertheless, it introduces additional steps (line 3, 13, 15--17 in Algorithm~\ref{alg:boolean}) to solve the boolean expression $\phi(t_2, ..., t_n)$. Specifically, $\SF$ substitutes all indexing terms $t_i$ to boolean variables $v_i$ ($i=2, ..., n$) and generates a boolean function $\hat{\phi}(v_2, ..., v_n)$. $\SF$ then sends $\hat{\phi}(v_2, ..., v_n)$ to $\IS$.
 $\IS$ sets the value of $v_i$ to the truth values of $h(\id_c, t_i)\in\XSet$. Then, it evaluates $\hat{\phi}(v_2, ..., v_n)$ and returns $\mathbf{E}_c$ as a result if $\hat{\phi}$ outputs true.

The $\mathrm{Boolean\ Query}$ algorithm can be utilised to enable set operations of social search queries as shown in Table~\ref{tlb:atomic}, which will be discussed in the following section.

\noindent{\bf Security.} In cryptographic terms, the $\OXT$ protocol is proved to be $\Leakage_{\OXT}$-semantically-secure against adaptive attacks, where $\Leakage_{\OXT}$ is the leakage function defined in~\cite{cash2013highly}. It ensures that the untrusted server only learns the information defined in the leakage function, but no other information about the query and underlying dataset. We refer the reader to Section~\ref{subsec:oxt} for more details.


\begin{algorithm}[!t]
\caption{Boolean Query}
\label{alg:boolean}
\begin{algorithmic}[1]
\Require $\TSet, \XSet$, Query $(t_1\wedge \phi(t_2, ..., t_n))$ with s-term $t_1$
\Ensure Encrypted Result $R$
\Function{BooleanQuery}{$\TSet, \XSet, \bar{t}, \phi$}($\bar{t}$ is the indexing term list $(t_1, ..., t_n)$, and $\phi$ is an arbitrary boolean expression)
	\State $\SF$ inputs indexing term $\bar{t}, \phi$, and $\IS$ inputs $\TSet, \XSet$;
	\State $\SF$ initialise a boolean expression $\hat{\phi}(v_2, ..., v_n)$ from $\phi$ and sends it to $\IS$;
	\State $\SF, \IS$ runs $\TSet(t_1)\leftarrow \textsc{IndexAccess}(\TSet, t_1)$;
	\State $\IS$ parses $\TSet(t_1)$ to $\{\mathbf{E}\}$;
	\For{$l=2:n$}
	\State $\SF$ computes $\xtoken(t_1,t_l)$;
	\State $\SF$ sends $\xtoken(t_1,t_l)$ to $\IS$;
	\EndFor
	\State $\IS$ initialises $R\leftarrow\{\}$;
	\For{$c=1:|\{\mathbf{E}\}|$}
		\For{$l=2:n$}
			\State $\IS$ uses $\xtoken(t_1,t_l)$ to compute $h(\id_c, t_l)$;
			\State $\IS$ lets $v_l=(h(\id_c, t_l)\in\XSet)$;
		\EndFor
		\If{$\hat{\phi}(v_2, ..., v_n)=`True'$}
			\State $\IS$ adds $\mathbf{E}_c$ in $R$;
		\EndIf
	\EndFor
	\State\Return $R$;
\EndFunction
\end{algorithmic}
\end{algorithm}
\subsubsection{Arithmetic} 
\system~uses {\it Arithmetic} operations to support complex scoring functions over the retrieved content from {\it Set Operations}.
{\it Arithmetic} operations in \system~involve the secure two-party computation between two $\ISC$s.
%
Here, we introduce the simplest model of \system, where each $\ISC$ only has one $IS$, for ease of presentation on how to use additive shares (see Section~\ref{subsec:secureComp} for detailed definition) to compute addition and multiplication under two-party setting. 
Note that this model can be extended to support multiple pairs of $\IS$s.

In \system, the posting list is generalised as a matrix and the arithmetic operations are evaluated over the matrix.
The reason for that is, instead of running the scoring function with arithmetic operations multiple times for each item of the posting list, the batch processing can reduce the system overhead and support scoring algorithms in parallel.
We denote the matrix of sort-keys returned from a structured information query by $\mathbf{S}$, and the corresponding shared matrix is denoted by $\langle\mathbf{S}\rangle^A$. 
Given two shared matrices $\langle\mathbf{A}\rangle^A$ and $\langle\mathbf{B}\rangle^A$, the addition operation ($Add^A(\langle\mathbf{A}\rangle^A,\langle\mathbf{B}\rangle^A)$) can be evaluated non-interactively by computing $\langle\mathbf{C}\rangle^A=\langle\mathbf{A}\rangle^A+\langle\mathbf{B}\rangle^A$ in each party.
To multiply two shared matrices ($Mul^A(\langle\mathbf{A}\rangle^A,\langle\mathbf{B}\rangle^A)$), two $\IS$s generate the multiplication triplets, which are shared matrices: $\langle\mathbf{X}\rangle^A, \langle\mathbf{Y}\rangle^A, \langle\mathbf{Z}\rangle^A$. $\mathbf{X}$ has the same dimension as $\mathbf{A}$, $\mathbf{Y}$ has the same dimension as $\mathbf{B}$, and $\mathbf{Z}=\mathbf{X}\times\mathbf{Y}\mod 2^l$. 
$\IS_i$ computes $\langle\mathbf{E}\rangle_i^A=\langle\mathbf{A}\rangle_i^A-\langle\mathbf{X}\rangle_i^A$ and $\langle\mathbf{F}\rangle_i^A=\langle\mathbf{B}\rangle_i^A-\langle\mathbf{Y}\rangle_i^A$, and sends it to its counter-party. 
Both parties then recover $\mathbf{E}, \mathbf{F}$ and let $\langle\mathbf{C}\rangle_i^A=i\cdot\mathbf{E}\times\mathbf{F}+\langle\mathbf{X}\rangle_i^A\times\mathbf{F}+\mathbf{E}\times\langle\mathbf{Y}\rangle_i^A+\langle\mathbf{Z}\rangle_i^A$.

The multiplication operation relies on the triplets, which should be generated before the actual computation. 
In addition, each party keeps their $\langle\mathbf{X}\rangle^A, \langle\mathbf{Y}\rangle^A$ in secret during the generation process, otherwise, they can recover $\mathbf{A}, \mathbf{B}$ after two parties exchanged $\langle\mathbf{E}\rangle^A, \langle\mathbf{F}\rangle^A$. 
Thus, \system~introduces a secure offline protocol~\cite{mohassel2017secureml} to generate the triplets via OT, it utilises the following relationship: $\mathbf{Z}=\langle\mathbf{X}\rangle_0^A\times\langle\mathbf{Y}\rangle_0^A+\langle\mathbf{X}\rangle_0^A\times\langle\mathbf{Y}\rangle_1^A+\langle\mathbf{X}\rangle_1^A\times\langle\mathbf{Y}\rangle_0^A+\langle\mathbf{X}\rangle_1^A\times\langle\mathbf{Y}\rangle_1^A$ to compute the shares of $\mathbf{Z}$. 
The resulting offline protocol is only required to compute the shares of $\langle\mathbf{X}\rangle_0^A\times\langle\mathbf{Y}\rangle_1^A$ and $\langle\mathbf{X}\rangle_1^A\times\langle\mathbf{Y}\rangle_0^A$ as the other two terms can be computed locally.

We illustrate the computing process of $\langle\mathbf{X}\rangle_0^A\times\langle\mathbf{Y}\rangle_1^A$ in the offline protocol.
The basic step of the offline protocol is to use $\langle\mathbf{X}\rangle^A_0$ and a column from $\langle\mathbf{Y}\rangle^A_1$ to compute the share of their product. 
This is repeated for each column in $\langle\mathbf{Y}\rangle^A$ to generate $\langle\mathbf{X}\rangle_0^A\times\langle\mathbf{Y}\rangle_1^A$.
Therefore, for simplicity, we focus on the above basic step:
We assume that the size of $\langle\mathbf{X}\rangle^A$ is $\mathrm{s}*\mathrm{t}$ and we denote each element in $\langle\mathbf{X}\rangle^A$ as $\langle x_{i,j}\rangle^A_0$, $i=1, ..., \mathrm{s}$ and $j=1, ..., \mathrm{t}$. 
In addition, we assume each column of $\langle\mathbf{Y}\rangle^A$ has $\mathrm{t}$ elements, which are denoted as $\langle y_j\rangle^A_1$, $j=1, ..., \mathrm{t}$.
The computation process is listed as follows:
\begin{itemize}[leftmargin=*]
\setlength{\itemsep}{0pt}
\setlength{\parsep}{0pt}
\setlength{\parskip}{0pt}

\item $\IS_0$ runs a correlated-OT protocol (COT)~\cite{asharov2013more}, and sets the correlation function to $f_b(x)=\langle x_{i,j}\rangle^A_0\cdot 2^b+x \mod 2^l$ for $b=1, ..., l$.
\item For each bit $b$ of $\langle y_j\rangle^A_1$, $\IS_0$ chooses a random value $r_b$ for each bit and runs $COT(r_b, f_b(r_b);\langle y_j\rangle^A_1[b])$ with $\IS_1$.
\item If $\langle y_j\rangle^A_1[b]=0$, $\IS_1$ gets $r_b \mod 2^l$; If $\langle y_j\rangle^A_1[b]=1$, $\IS_1$ gets $ f_b(r_b)=\langle x_{i,j}\rangle^A_0\cdot 2^b+r_b \mod 2^l$. It is equivalent to get $\langle y_j\rangle^A_1[b]\cdot \langle x_{i,j}\rangle^A_0\cdot 2^b+r_b \mod 2^l$ in $\IS_1$ side.
\item $\IS_1$ sets $\langle \langle x_{i,j}\rangle^A_0\cdot \langle y_j\rangle^A_1\rangle_1^A=\Sigma_{b=1}^l \langle y_j\rangle^A_1[b]\cdot \langle x_{i,j}\rangle^A_0\cdot 2^b+r_b \mod 2^l$, and $\IS_0$ sets $\langle \langle x_{i,j}\rangle^A_0\cdot \langle y_j\rangle^A_1\rangle_0^A=\Sigma_{b=1}^l (-r_b) \mod 2^l$. 
\end{itemize}
%
%
%
%
%
%
After computing $\langle \langle x_{i,j}\rangle^A_0\cdot \langle y_j\rangle^A_1\rangle^A$, the $j$-th element of the $i$-th row in $\langle\langle\mathbf{X}\rangle_0^A\times\langle\mathbf{Y}\rangle_1^A\rangle^A$ is $\Sigma_{j=1}^\mathrm{t} \langle \langle x_{i,j}\rangle^A_0\cdot \langle y_j\rangle^A_1\rangle^A$. 
Analogously, $\IS_0$ and $\IS_1$ can compute the share of $\langle\mathbf{X}\rangle_1^A\times\langle\mathbf{Y}\rangle_0^A$ in the same way.
%
%

\noindent{\bf Security.} Additive sharing scheme offers security guarantees to {\it Arithmetic} operations in \system~via its computational indistinguishable property.
More specific, as discussed in~\cite{pullonen2012design}, the scheme can create a uniformly distributed input and output to protects the original input/output of {\it Arithmetic} operation under the threat model of \system, i.e., semi-honest but non-colluding two-party.

\begin{figure}[!t]
\includegraphics[width=\linewidth]{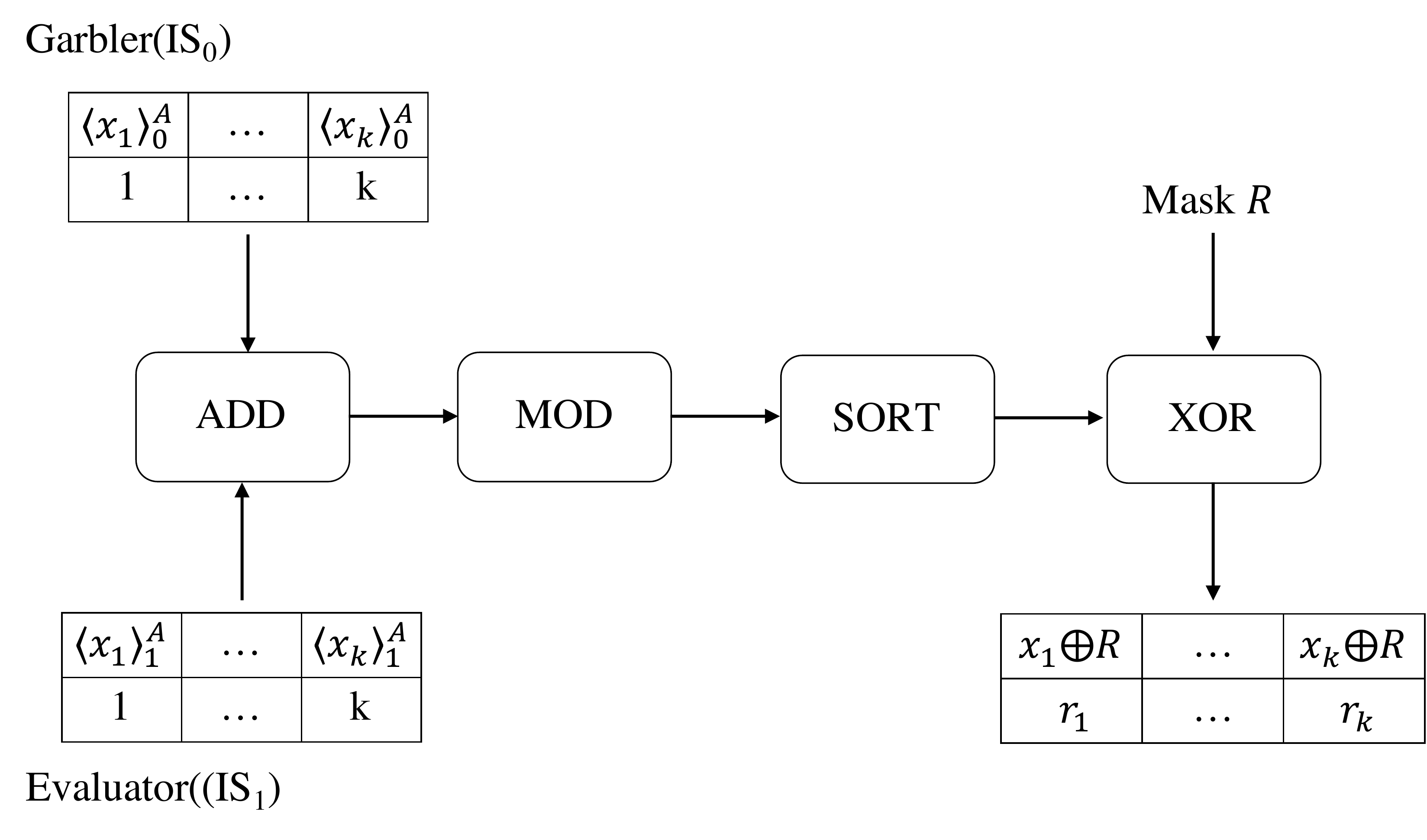}
\caption{Local sorting process for one pair of garbler and evaluator. The input of the garbler is at the top, and the input/output of the evaluator is on the bottom.}
\vspace{-10pt}
\label{fig:localsorting}
\end{figure}
\subsubsection{Sorting} This is a required operation in order to rank the computed scores from {\it Arithmetic} operations.
A naive solution is to recover all scores from additive shares in $\SF$ and sort them as plaintext. However, transmitting all rank results to $\SF$ is a bandwidth-consuming task, the sort operation can be very inefficient as the result. 
Therefore, \system~chooses to mix the additive sharing scheme and Yao's Garbled Circuit (see Section~\ref{subsec:secureComp} for details) to support arithmetic operations and comparison at the same time, as it avoids the communication overhead from sending the shares back to $\SF$.
To protect the privacy of score values, the generated circuit should have a fixed sequence of comparison for a given size of inputs (i.e., achieving the trace-oblivious), and it should not reveal the actual scoring value after circuit evaluation.

\noindent{\bf Local Sorting.} To enable sorting on $\IS$s, \system~leverages an efficient scheme in~\cite{demmler2015aby} to switch from additive sharing to Yao's sharing. It then adopts the sorting network~\cite{batcher1968sorting} to generate the optimised sorting circuit. Finally, the garbler concatenates the sorting network with an XOR gate and applies a random mask $R$ to mask the score values. As a result, the evaluator can use decode table $dec$ to figure out the rank, but it does not know the score values.
Thus, the local sorting algorithm in \system~can be divided into five phases. Figure~\ref{fig:localsorting} illustrates the process of local sorting.

Given $\IS_0$ as the garbler and $\IS_1$ as the evaluator, both parties pre-share a scoring vector $\mathsf{x}=\{\langle x_i\rangle^A\}_{i=1}^\mathrm{k}$, \system~runs the protocol $\textsc{LocalSort}(\mathsf{x})$ to sort the vector and returns a sorted vector in descending order of $\mathsf{x}$, the protocol can be summarised as follows:
\begin{itemize}[leftmargin=*]
\setlength{\itemsep}{0pt}
\setlength{\parsep}{0pt}
\setlength{\parskip}{0pt}
\item Phase 1: $\IS_0$ runs $\GC$ to generate the circuit in Figure~\ref{fig:localsorting} as well as its decode table $dec$. It then sends the circuit and the decode table $dec$ to $\IS_1$. Doing so ensures that $\IS_1$ only can see the final result with random mask $R$.
\item Phase 2: $\IS_0$ sends the encoded inputs of its additive shares $\{\langle x_i\rangle_0^A\}_{i=1}^\mathrm{k}$ with a payload vector $\{i\}_{i=1}^\mathrm{k}$ indicating the position. This prevents $\IS_1$ from learning the additive shares of $\IS_0$.
\item Phase 3: $\IS_1$ retrieves the encoded inputs of its additive shares $\{\langle x_i\rangle_1^A\}_{i=1}^\mathrm{k}$ and payload vector from $\IS_0$ via OT protocol. This prevents $\IS_0$ from learning the additive shares of $\IS_1$.
\item Phase 4: $\IS_0$ generates the encoded input of a random mask $R$ to perform the last XOR gate to protect the vector.
\item Phase 5: $\IS_1$ uses the given inputs to evaluate the circuit, and uses $dec$ to decode the outputs. 
\end{itemize}

Since the circuit puts a mask $R$ after sorting, $\IS_1$ only gets the ranking $\{r_i\}_{i=1}^k$ without knowing the actual scores.

\begin{figure}[!t]
\includegraphics[width=\linewidth]{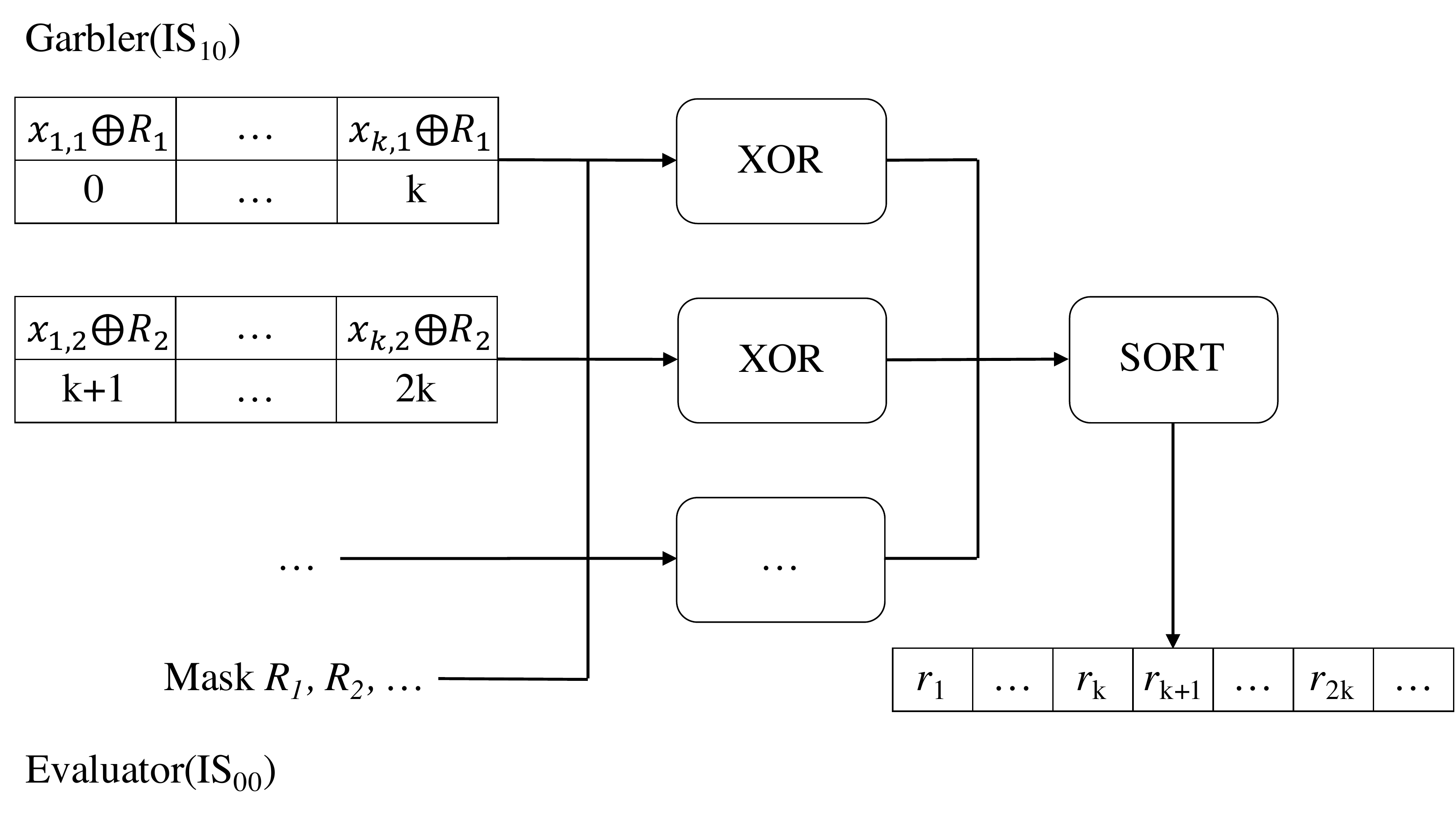}
\caption{Global sorting process in the coordinators. The input of the garbler is the masked score vector with a payload that indicates the position of the score in vector, and the input of the evaluator is the random masks.}
\vspace{-10pt}
\label{fig:globalsorting}
\end{figure}
\noindent{\bf Global Sorting.} The above sorting strategy is a suitable and efficient solution for the simplest model, i.e., only one $\IS$ in each $\ISC$. However, it can be problematic when each $\ISC$ has several $\IS$s. In this case, no $\IS$ can provide a full sorted list as each $\IS$ only has a disjoint part of the whole graph. Hence, $\SF$ still needs to perform another inefficient plaintext sorting. 

Therefore, \system~uses a specific protocol which runs by a chosen coordinator of each $\ISC$. 
The protocol can perform an extra round of sorting upon the results from local sort while keeping the scoring value in secret. 
Assuming that each $\ISC$ has $n$ different $\IS$s, $\IS_{00}$ and $\IS_{10}$ are chosen to be the coordinators for $\ISC_0$ and $\ISC_1$, respectively. 
After local sorting, evaluator in $\ISC_1$ sends a vector of masked scoring values $\{x_{i,j}\oplus R_j\}$ where $i=1, ..., \mathrm{k}$ and $j=1, ..., \mathrm{n}$ to $\IS_{10}$ and garbler in $\ISC_0$ sends the masks $\{R_j\}_{j=1}^\mathrm{n}$ to $\IS_{00}$. 
In the global sorting, \system~switches the roles of $\IS_{10}$ and $\IS_{00}$ (i.e., $\IS_{10}$ is the garbler, and $\IS_{00}$ is the evaluator) for two reasons:
It prevents $\IS_{10}$ from evaluating two circuits at the same time, as $\IS_{10}$ also needs to evaluate another local sorting circuit for its partition.
Furthermore, it facilitates pipeline data processing. The partial result of each $\IS$ can be sent to $\IS_{10}$ and $\IS_{00}$ for global sorting separately. Once the $\IS_{10}$ and $\IS_{00}$ get the first result, they can start to run encoding and OT.
The protocol $\textsc{GlobalSort}(\mathsf{x})$ is summarised as follows: 
\begin{itemize}[leftmargin=*]
\setlength{\itemsep}{0pt}
\setlength{\parsep}{0pt}
\setlength{\parskip}{0pt}
\item Phase 1: $\IS_{10}$ runs $\GC$ to generate the circuit in Figure~\ref{fig:globalsorting} and the decode table $dec$. It then sends the circuit and $dec$ to $\IS_{00}$.
\item Phase 2: $\IS_{10}$ sends the encoded inputs $\{x_{i,j}\oplus R_j\}$ where $i=1, ..., \mathrm{k}$ and $j=1, ..., \mathrm{n}$ with a payload vector $\{i\}_{i=1}^\mathrm{n\cdot k}$.
\item Phase 3: $\IS_{00}$ retrieves encoded masks $\{R_j\}_{j=1}^\mathrm{n}$ via OT protocol.
\item Phase 4: $\IS_{00}$ uses the given inputs to evaluate the circuit, and uses $dec$ to decode the outputs. The result is sent to $\SF$ in descending order.
\end{itemize}

\noindent{\bf Security.} The security properties of Garbled Circuit~\cite{bellare2012foundations} and OT~\cite{asharov2013more} ensure the security of {\it Sorting} in \system~in the following three aspects: 
Firstly, no adversary can learn the input of its counter-party when sorting (i.e., the other additive share in $\textsc{LocalSort}(\mathsf{x})$ and the masked score/mask in $\textsc{GlobalSort}(\mathsf{x})$).
Secondly, the output of $\textsc{LocalSort}(\mathsf{x})$ is masked by a one-time mask, which is a uniformly random number. It protects the original score vector because the evaluator only learns the masked values from output.
Finally, for the output of $\textsc{GlobalSort}(\mathsf{x})$, only the decode table of rank is sent to the evaluator, which also ensures that the evaluator only learns global rank without knowing the actual ranking score.

\section{Query Realisation}\label{sec:queries}
In \system, $\SF$ receives queries as the query strings in the form of s-expression. It is composed of several operators to describe the set of results the client wishes to receive (see Table~\ref{tlb:atomic}).
In the following sections, we introduce the operators in \system~and their security properties.

\subsection{Graph Operators}\label{subsec:operator}
\system~uses atomic operations in Section~\ref{subsec:atomic} to realise all operators in Table~\ref{tlb:atomic}. 
In this section, we present the detailed constructions of these operators. 
Note that we only consider the operators as the outermost operators, i.e., they are not nested in any other query strings, because the query plan generation highly depends on the outermost operators.

\noindent{\bf term.} The \textbf{term} operator runs an \textit{Index Access} operation to retrieve a posting list. 
In addition, if there is a requirement to rank the result, the \textit{Sorting} operation is able to return a sorted posting list which puts the record with higher relevance at the beginning of the list.

\noindent{\bf and.} This operator is natively supported by the \textsc{BooleanQuery} algorithm. 
As mentioned in Section~\ref{subsec:atomic}, conjunctive queries with nested queries (e.g., $t_1\wedge \phi(t_2, ..., t_n)$) are processed by evaluating the boolean expression $\phi$ in $\XSet$. It is obvious that the \textbf{and} operator is executed in a sub-linear time, as its complexity is proportional to the size of $\TSet(t_1)$.

\noindent{\bf difference.} The \textbf{difference} operator is extended from \textsc{BooleanQuery} algorithm. Considering the query (\textbf{difference} \textbf{friend:3} (\textbf{and} \textbf{friend:1} \textbf{friend:2})) from Table~\ref{tlb:atomic}, it aims to find the friends of $\textbf{user}_3$, who are neither $\textbf{user}_1$'s friends nor $\textbf{user}_2$'s friends. The boolean expression, in this case, is $\textbf{friend:1}\land \textbf{friend:2}$, but the results that satisfy the expression are removed from the results of the query $(\textbf{term}\ \textbf{friend:3})$.
In summary, \textbf{difference} operator excludes the results that satisfy the boolean expression $\phi$. Therefore, the s-expression with \textbf{difference} operator is represented as
$t_1\wedge \neg\phi(t_2, ..., t_n)$.
Comparing with the \textbf{and} operator, it returns the results only if the boolean expression $\phi$ returns false instead of true.

\noindent{\bf or.} The complexity of the original approach for processing disjunctive queries is linear to the size of database~\cite{cash2013highly}. 
To achieve a sub-linear time complexity, we leverage the above \textbf{difference} and \textbf{term} operators to build a new disjunctive query operator. 
In particular, the s-expression starts with \textbf{or} operator can be processed via a list of \textbf{difference} expressions and an additional \textbf{term} expression. 
For instance, if a disjunctive query has three indexing terms: $t_1, t_2, t_3$, the corresponding s-expression is ($\textbf{or}\ t_1\ t_2\ t_3$), and it is parsed as:
$(\textbf{difference}\ t_1\ (\textbf{or}\ t_2\ t_3)), (\textbf{difference}\ t_2\ t_3), (\textbf{term}\ t_3)$.
The above three s-expressions return three different sets of results, and the composite of them is the final result of \textbf{or} operator. The correctness of the above approach can be easily proved by the set operation:
$t_1\lor t_2\lor t_3 = (t_1\textbackslash(t_2\lor t_3)) \lor (t_2\textbackslash t_3) \lor t_3$.

In general, a disjunctive s-expression with $n$ indexing terms can be rewritten as $n-1$ s-expressions with \textbf{difference} operator and $1$ s-expressions with \textbf{term} operator. The complexity is proportional to $|t|\cdot M$, where $|t|$ is the number of disjunctive indexing terms and $M$ is the result size of the most frequent term $\max(\{|\TSet(t_i)|\}_{i=1}^n)$.

\subsection{Apply Operator}
The \textbf{apply} is a unique operator in Unicorn~\cite{curtiss2013unicorn}, which enables graph-traversal. 
The basic idea is to retrieve the results of nested queries and use these results to construct and execute a new query. 
For example, given an s-expression (\textbf{apply friend:} \textbf{friend:}$\id_1$), $\SF$ issues a query (\textbf{term}\ \textbf{friend:}$\id_1$) and collects $N$ users, it then generates the second query
$(\textbf{or}\ \textbf{friend:}\id_{1,1}\ ...\ \textbf{friend:}\id_{1,N})$ to get the entities that are more than one edge away from the user $\id_1$ in the encrypted graph-structured data.

\system~defines a query structure to construct \textbf{apply} operator. 
In details, the query structure is a tuple of (\textit{prefix}, \textit{s}, \textit{filter}), where the \textit{prefix} (e.g., \textbf{friend:}) is prepended to the given $\id$ to form the indexing terms, \textit{s} is an s-expression with $N$ indexing terms (e.g., (\textbf{term}\ \textbf{?})), and \textit{filter} indicates the ranking algorithm for its results.
To execute an \textbf{apply} operator, $\SF$ pre-processes the input $\id$ with a given \textit{prefix} in the query structure and uses processed $\id$ to execute the s-expression \textit{s} from the query structure.
$\ISC$~handles the query from the s-expression and applies the designated \textit{filter} in the query structure to refine the result.
Consequently, $\SF$ can retrieve a list of $\id$ as the result of the nested query structure. \system~leverages as input the retrieved $\id$ and outer query structure to repeat the above procedure until it reaches the outermost query structure.
Algorithm~\ref{alg:apply} gives the detailed implementation of the \textbf{apply}.

\begin{algorithm}[!t]
\caption{Apply}
\label{alg:apply}
\begin{algorithmic}[1]
\Require Outer Query Structure $QS_O$, Nested Query Structure $QS_N$, Array of $\id$
\Ensure Encrypted Result $O$
\Function{Apply}{$QS_O, QS_N, \id$}(in $\SF$)
\For{$i=1:QS_N.\textit{s}.size$}
\State$term\leftarrow QS_N.\textit{prefix}||\id[i]$;
\State $QS_N.\textit{s}[i]\leftarrow term$;
\EndFor
\State $o\leftarrow \mathrm{Search}(QS_N.\textit{s}, QS_N.\textit{filter})$;
\For{$i=1:o.size$}
\State$term\leftarrow QS_O.\textit{prefix}||o[i].id$;
\State $QS_O.\textit{s}[i]\leftarrow term$;
\EndFor
\State $O\leftarrow \mathrm{Search}(QS_O.\textit{s}, QS_O.\textit{filter})$;
\State\Return $O$;
\EndFunction
\end{algorithmic}
\begin{algorithmic}[1]
\Require Query $q$, Result Filter $f$
\Ensure Encrypted Result $r$
\Function{Search}{q, f}(in $\IS$)
\State $r\leftarrow f(Execute(q))$;
\State\Return $r$
\EndFunction
\end{algorithmic}
\end{algorithm}

The \textbf{apply} operator processes necessary steps on behalf of its users to improve the efficiency of \system. 
For example, it would be possible for users to ask recommendation from friends: both $\SF$ and client can execute a two-step query to retrieve friend list in advance and issue an additional query to get the recommendation. 
Compared to the latter strategy, the \textbf{apply} operator runs in $\SF$ can highly reduce the workload on the client side: it saves the network latency of transmitting intermediate result between $\SF$ and client, addition to the computational cost of aggregation and regeneration. 
Furthermore, $\SF$ can further optimise the query and adopt the different scoring strategy by giving its semantic context (as shown in the following example).

\noindent{\bf Example: Friend Recommendation.}
The friend recommendation is a good example for the use of the \textbf{apply} operator. 
According to the Homophily theory~\cite{mcpherson2001birds}, people with higher similarity have a higher probability to become friend. 
In this context, the system aims to recommend the friends-of-friends to its user according to the order of similarities. 
Hence, we apply a simple ranking function which returns the sorted similarity value directly for both outer and nested query for this application. 


It is also possible to implement the friends-of-friends query without the \textbf{apply} operator. Intuitively, friends-of-friends also can be treated as an edge type, \system~may explicitly store the friends-of-friends list and use the indexing term $\textbf{friends-of-friends:id}$ to index it. Therefore, the friend recommendation problem is easily processed by the \textbf{term} operator.
However, such a naive solution blows up the memory consumption on $\IS$: as shown in Table~\ref{tlb:FoF}, the estimated size of friends-of-friends posting list is almost 370x larger than the original friend list of a typical user~\cite{curtiss2013unicorn}. 
In \system, each encrypted tuple occupies $56$ bytes memory space (See Section~\ref{subsec:storage} for the detailed discussion), it indicates that the friends-of-friends posting lists for 1 million users consumes $2.4$TB RAM.

The \textbf{apply} operator also reduces the query latency: it is expensive to sort the posting lists of $\textbf{friends-of-friends:id}$ with 48k entities inline in Garbled Circuit (it needs $200$ s to evaluate the corresponding circuit). 
In comparison, introducing an extra round to query enables $\IS$ to truncate the result, and makes the sorting process more efficient. 
For example, if \system~applies a \textit{filter} to return the top $10$ results to $\SF$ for the nested query, the result size can be reduced to around $1000$ users with a higher similarity.
%
%
Also, the reduced result size is moderate to evaluate sort circuit on it. Under our settings in Section~\ref{sec:evaluation}, the system replies either a full result list after $5$-$6$ s or a truncated result list after $1$-$2$ s.

\begin{table}[!t]
	\centering
	\caption{Performance estimation of Friend Recommendation implementations with 1 million users.}
	\label{tlb:FoF}
	\begin{tabular}{lll}
	& {\bf friends-of-friends} & \textbf{apply} \\ 
	\hline\hline
	{\bf Est. \# of users/posting list} & $130$ & $48$k \\
	{\bf Est. Storage overhead} & $7.28$GB & $2.4$TB \\
	{\bf Query delay} & $200$ s & $1\text{-}2$ s\\
	\hline\hline
	\end{tabular}
	\vspace{-10pt}
\end{table}

\subsection{Security Analysis}
In Section~\ref{subsec:atomic}, we discuss the security of each atomic operation. 
%
%
Here, we analyse the security of the overall system. 
Specifically, we formulate the security of \system~based on the prior work of SSE~\cite{curtmola2011searchable,cash2013highly} and further combine the security of additive sharing and Garbled Circuit to depict the security of query operators.  
Throughout the analysis, we consider a query in \system~containing a boolean formula $\phi$ and a tuple of indexing terms $(t_1, t_2, ..., t_n)$. 

\noindent{\bf Overview.} The main idea of analysing the security of \system~is similar to that in SSE scheme~\cite{curtmola2011searchable,cash2013highly}. Specifically, the analysis constructs a simulator of \system~to show that the adversary in \system~only learns the controlled leakage parameterised by a leakage function $\Leakage$, after querying a vector of queries $\query$.
Note that the simulator of \system~is slightly different from the original SSE simulator, we outline these different points as the sketch of our security analysis.
Firstly, we update the leakage function of SSE ($\OXT$ in \system) to additionally capture the ranks leaked in query results.
Secondly, we slightly modify the capability of adversaries to fit our two-party model: an adversary in our system is able to see the view on the corrupted cluster as well as the output of the counter-party. 
Under the new adversary model, the joint distribution of the outputs of both the adversary and the counter-party can be properly simulated by an efficient algorithm with the updated leakage function.
Finally, as \system~has the submodules implemented by SSE, additive share scheme, and Yao's Garbled Circuit, the simulator of \system~can be constructed by combining the simulators of these submodules.
For instance, our simulator uses the output of SSE simulator as the input of garbled circuit simulator in order to simulate the query operators with structured data access and sorting.

Due to the page limit, the updated leakage function is given in Appendix~\ref{subsec:leak} and the detailed proof is in Appendix~\ref{subsec:proof}.

\noindent{\bf Discussion.} Note that there exist some emerging threats against the building blocks of \system. Regarding SSE, leakage-abuse attacks~\cite{cash2015leakage,zhang2016all} can help an attacker to explore the information learned during queries. To mitigate them, recent studies on padding countermeasures~\cite{cash2015leakage,bostthwarting} and forward/backward privacy~\cite{bost2017forward,sun2018practical} are proposed and shown to be effective. We leave the integration of these advanced security features to our system as future work. 
Regarding sorting, \system~reveals the rank of the query result. Recent work~\cite{kellaris2016generic, kornaropoulosdata} demonstrates that the underling data values are likely to be reconstructed if an adversary knows ranks and some auxiliary information of queries and datasets. Currently, we do not consider such a strong adversary, and how to fully address the above threat remains as an interesting problem.


%


\section{Implementation}\label{sec:impl}
We implement a prototype system for evaluating the performance of \system. 
To build this prototype, we first realise the cryptographic primitives in Section~\ref{sec:pre}. 
Specifically, we use the symmetric primitives, i.e., AES-CMAC and AES-CBC, from Bouncy Castle Crypto APIs~\cite{castle2007legion}. 
In addition, we use a built-in curve from Java Pairing-based Cryptography (JPBC)~\cite{CaroI11} library (Type A curve) to support the group operations in $\OXT$. 
The security parameter of symmetric key encryption schemes is 128-bit, and the security parameter of the elliptical curve cryptographic scheme is 160-bit.
Regarding the secure two-party computation, we set the field size to $2^{31}$. 
Therefore, we can use regular arithmetics on Java integer type to implement the modulo operations, as it is significantly faster than the native modulo operation in Java BigInteger type (i.e., we observed that it is 50x faster). 
We involve this optimisation into the implementation of additive sharing scheme in the finite field $\mathbb{Z}_{2^{31}}$, the addition (multiplication) operations is calculated by several regular addition and multiplication operations with the modulo operation.
Oblivious Transfer and Garbled Circuit are implemented by using FlexSC~\cite{xiao2018gc}. It implements the extended OTs in~\cite{asharov2013more} and several optimisations for the garbled circuit, which make it a practical primitive under Java environment.

The prototype system consists of three main components: the {\it encrypted database generator}, the {\it query planner} in $\SF$ and the {\it index server daemon} in $\ISC$.
The {\it encrypted database generator} is running on a cluster with Hadoop~\cite{hadoop}.
It partitions the plaintext data, runs the adapted $\OXT$ to convert the data into encrypted tuples with the additive share of sort-keys and stores these tuples on each $\IS$. 
We leverage Spark~\cite{zaharia2010spark} to execute these tasks in-memory and enable the pipelining data processing to further accelerate this process. 
The generated tuples are stored in the in-memory key-value store Redis~\cite{redis} in the form of $\TSet$ on each $\IS$ for querying purpose later. In addition, the generated $\XSet$ is kept in the external storage of each $\IS$ to support the set operations.
All queries are handled by the {\it query planner} and {\it index server daemons} by following the query processing flow in Section~\ref{subsec:arch}. Thrift thread pool proxy~\cite{slee2007thrift} is deployed to handle the queries in {\it index server daemons}.

To improve the runtime performance of our prototype, each posting list is segmented into fixed-size blocks indexed by its $\stag(t)$ and a block counter $c$ for the $\stag$. 
As the final result of the block counter indicates the total number of blocks for each $\stag$, it is also stored in Redis after the whole posting list is converted to encrypted tuples. Those counters enable $\IS$ to retrieve multiple tuples in parallel.
We also introduce a startup process for $\OXT$ protocol and secure two-party computation in {\it index server daemons}.
In terms of the $\OXT$ matching, the index server daemon creates a Bloom Filter~\cite{Bloom70} to load the $\XSet$ into memory during the startup process. In our prototype, we deploy the Bloom filter from Alexandr Nikitin as it is the fastest Bloom filter implementation for JVM~\cite{alexandrnikitin2017bloomfilter}. We set the false positive rate to $10^{-6}$, and the generated Bloom Filter only occupies a small fraction of $\IS$ memory.
Besides generating the Bloom Filter, each index server also pre-computes several multiplication triplets and sorting circuits and periodically refreshes it to avoid extra computational cost on-the-fly.

Our prototype system implementation consists of four main modules with roughly 3000 lines of Java code, we also implement a test module with another 1000 lines of Java code.

\begin{table}[!t]
\centering
	\caption{Statistics of Youtube social network dataset.}
	\label{tlb:statistic}
	\begin{tabular}{|c|c|c|c|}
	\hline
	Node type& \# of nodes & Edge type & \# of edges \\
	\hline
	User & $1157827$ & {\bf friend} & $4945382$\\
	\hline
	Group & $30087$ & {\bf follow} & $293360$ \\
	\hline
	\end{tabular}
	\vspace{-10pt}
\end{table}
\section{Experimental Evaluations}\label{sec:evaluation}
\subsection{Setup}
{\bf Platform.} We deploy the {\it index server daemons} in a cluster ($\ISC$) with 6 virtual machine instances in the Microsoft Azure platform. All VMs are E4-2s v3 instances, configured with 2 Intel Xeon E5-2673 v4 cores, 32GB RAM, 64GB SSD external storage and 40Gbps virtualised NIC. 
Another D16s v3 instance is created in Azure to run $\SF$ with the {\it query planner} and client; it is equipped with 16 Intel Xeon E5-2673 v3/v4 cores, 64GB RAM, 128GB SSD external storage and 40Gbps virtualised NIC. 
We also have the other three E4-2s v3 instances controlled by $\SF$; we use them to run the {\it encrypted database generator} for generating the encrypted index.
All VMs are installed with Ubuntu Server 16.04LTS.

\noindent{\bf Dataset.} We use a Youtube dataset from~\cite{mislove-2007-socialnetworks}, which is an anonymised Youtube user-to-user links and user group memberships network dataset. The detailed statistical summary is given in Table~\ref{tlb:statistic}. 
We recognise the user-to-user links as {\bf friend} edge and user group memberships as {\bf follow} edge from this dataset. The generated posting lists are indexed by above two edge types and user $\id$s.
As the social network in our Youtube dataset is an unweighted network, we randomly generate a weight between $1$ and $100$ for each edge to evaluate the arithmetic and sort operations of \system.

\noindent{\bf Baseline.} To evaluate the performance of \system, we create a graph search system by removing/replacing cryptographic operations in this baseline system. 
Specifically, we leverage hash function to generate $\xtag$ instead of using expensive group operations. 
The index and sort-key are stored in plaintext, which means that the $\IS$ can compute and sort without any network communication for OT and multiplication triplets. 
Finally, the query planner provides the indexing term in plaintext instead of $\stag$ to the $\IS$ as query token. 
Nonetheless, we still use the PRF value of indexing term and the block counter as tuple index, because we want to keep the table structure of $\TSet$ unaltered to make our system comparable to the baseline.
We use this baseline system to evaluate the overhead from cryptographic operations as \system~implements the same operators as Facebook Unicorn~\cite{curtiss2013unicorn}.

\subsection{Evaluation Results}
\begin{table*}[!t]
\centering
	\caption{Benchmark of sorting circuit size and evaluation time, the garbled sorting algorithm is bitonic merging/sorting, we use it to sort $2^l$ vector.}
	\label{tlb:gc}
	\begin{tabular}{cccccccc}
	\hline
	\hline
	Vector length & 2 & 4 & 8 & 16 & 32 & 64 & 128 \\
	\# of AND Gates & 4382 & 9148 & 19448 & 41968 & 91616 & 201664 & 446336\\
	GC evaluation time (ms) & 17.3 & 20.3 & 31.5 & 48.2 & 101.0 & 206.5 & 440.0\\
	GC comm. overhead (MB) & 0.12 & 0.41 & 0.46 & 0.97 & 2.10 & 4.49 & 9.80\\
	\hline
	\hline
	\end{tabular}
	\vspace{-10pt}
\end{table*}

\label{subsec:storage}
\begin{figure}[!t]
\includegraphics[width=\linewidth]{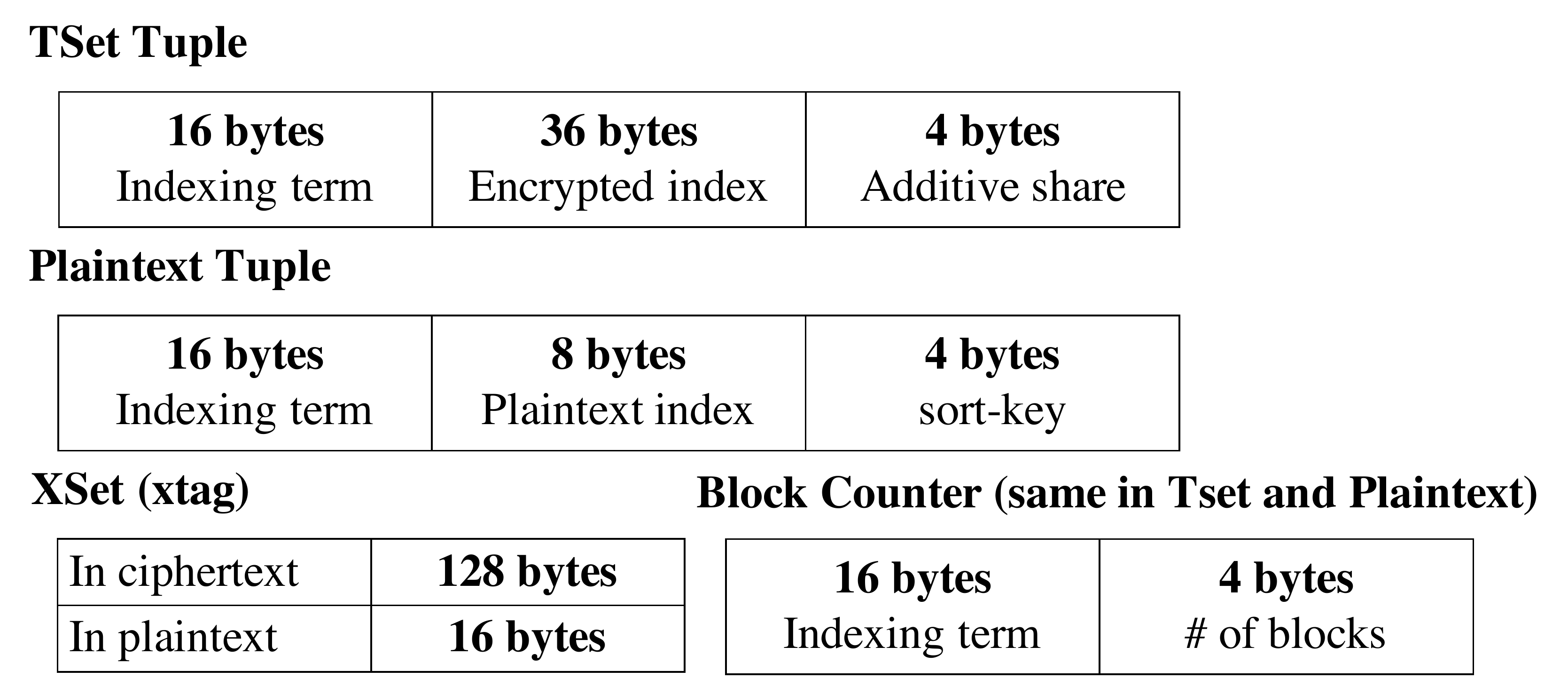}
\caption{A tuple-wise storage overhead comparison between the encrypted database and plaintext database.}
\vspace{-10pt}
\label{fig:storage}
\end{figure}

{\bf EDB Generation.} Firstly, we demonstrate the runtime performance of the {\it encrypted database generator}. 
The generator needs to partition and create additive shares from the original plaintext data, and to generate the adapted $\OXT$ index for each $\IS$.
\system~uses $\SF$ to locally generate the partitions and additive shares for our dataset and then uses the dedicated cluster to generate the encrypted graph index in parallel.
The result on our $5$ million records dataset shows that it only takes $54$ s to pre-process data on $\SF$ and $7.4$ mins to generate the encrypted index via Spark.

\noindent{\bf Storage.} Recall that \system~uses adapted $\OXT$ index to support boolean queries over the encrypted graph, which needs to generate two dedicated data structure (i.e., $\TSet$ and $\XSet$).
As a result, \system~consumes more storage capacity than the baseline system (see Figure~\ref{fig:storage}), because it is required to keep more information (i.e., $\xtag$ in ciphertext), and because it stores encrypted index which is larger than the corresponding plaintext. 
By using the $\TSet$, we observe that our system increases the memory consumption of Redis by $85\%$ (557MB in $\TSet$ and 300MB in plaintext), which is slightly smaller than the theoretical memory consumption overhead ($100\%$ according to Figure~\ref{fig:storage}). 
The reason is that \system~also keeps the number of blocks of each posting list (see Section~\ref{sec:impl}) to accelerate the tuple retrieving process\footnote{If the size of posting list is unknown, the system needs to sequentially retrieve the tuple from blocks, as the key is derived from $\stag$ and block counter. Otherwise, the tuples can be retrieved in parallel.}.
As shown in Figure~\ref{fig:storage}, the block counter requires an additional $20$ bytes of memory consumption for each indexing term both in $\TSet$ and in plaintext. 
It introduces the same extra cost on \system~as well as in the baseline system, and makes the memory consumption overhead smaller than the theoretical expectation.

For $\XSet$ storage overhead, \system~increases it by 17x (1.5GB versus 90MB), mostly due to the fact that the size group element is much larger than a PRF value. 
But the Bloom Filter successfully saves the memory consumption in runtime, because the size of Bloom Filter only depends on the false positive rate and the number of total elements inside (the number of edges in our system)~\cite{Bloom70}, and it is much smaller than $\XSet$ itself. By fixing the false positive rate to $10^{-6}$, the runtime overhead of $\XSet$ in our system is identical to the baseline system (only 18MB in RAM).

\begin{figure}[!t]
\includegraphics[width=\linewidth]{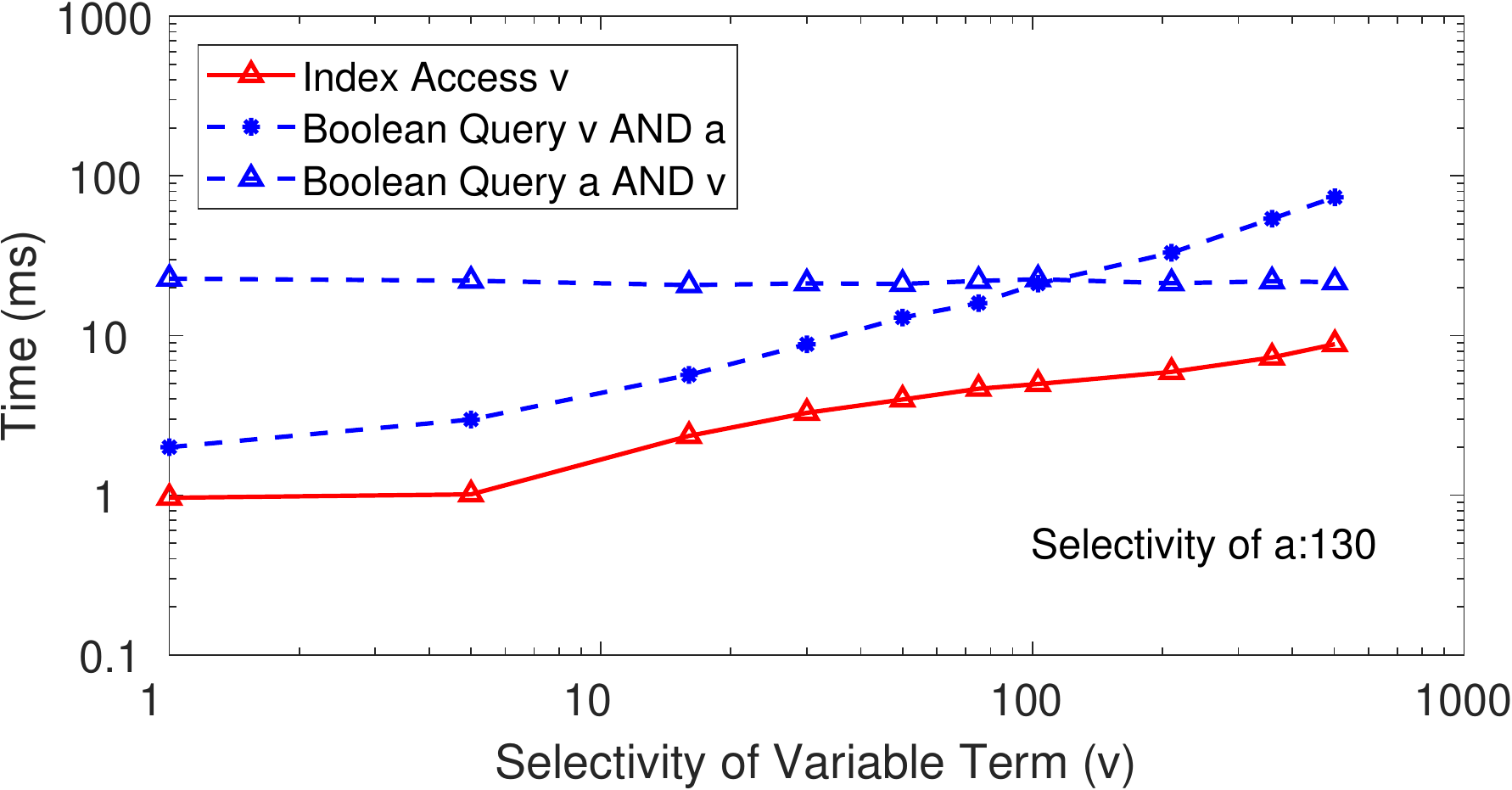}
\caption{Query delay for two-keyword set queries.}
\vspace{-10pt}
\label{fig:twokeywords}
\end{figure}
\noindent{\bf Query Delay.} To understand the query delay introduced by cryptographic primitives, we measure the cryptographic overhead from these cryptographic primitives independently. 
To evaluate the query delay introduced by the set operations, we choose an indexing term {\bf a} from {\bf friend} edges with fixed selectivity ($130$, as the average user has $130$ friends according to Unicorn paper~\cite{curtiss2013unicorn}), and we further choose several variable indexing terms {\bf v} from {\bf friend} edges with selectivity from $1$ to $502$. Figure~\ref{fig:twokeywords} illustrates query delay on Index Access {\bf v} and two variants of two-term Boolean Query. 
In Index Access {\bf v} query, the query only consists of s-term {\bf v}, and the figure shows that its execution time is linear to the size of corresponding posting lists. The other two-term queries combine the previous queries with the fixed term {\bf a}. In the first of these two queries, we use {\bf a} as x-term, each tuple from $\TSet({\bf v})$ should be checked wtih the cost of an exponentiation, which requires $2$ to $79$ ms to process. 
In the last one, {\bf a} is used as s-term, where we observe that the execution time is kept invariable ($20$ ms), irrespective of the variable selectivity of the xterm {\bf v}. 
It demonstrates that \system~can respond a query related to the moderate users with a tiny latency, and it is also able to reply query about popular users with a slightly bigger but still modest delay.

The secure addition and multiplication is supported by additive sharing scheme with a relatively small overhead, because in most cases, the two servers non-interactively do the computation tasks by using regular arithmetic operations, and because the expensive tasks, such as multiplication triplets generation, are pre-computed in the startup process. Figure~\ref{subfig:addition} and \ref{subfig:multiplication} demonstrate the execution delay of addition and multiplication over the different size of vectors. We can see that the addition operation with two vectors containing $10^4$ entities can be done within $2$ ms. For the multiplication operation, it needs $80$ ms to compute the product of $10^4$ entities because it requires several efficient but non-negligible communications with the counter-party.

As the sorting algorithm is implemented by Garbled Circuit, we provide a benchmark about the size of the circuit and the corresponding evaluation time. The results are listed in Table~\ref{tlb:gc}. Note that we do not report the circuit generation time as it is generated in the startup process. The reported result demonstrates the practicality of our sorting strategy: the local sorting generally involves fewer entities after partition ($30$ if the system sorts a result list with $130$ users), which takes $100$ ms to sort. Additionally, the local sort can help to truncate the result before sending it for global sorting, which makes the sorting algorithm in Garbled Circuit more efficient (less than $440$ ms for a vector with $128$ entities).
\begin{figure}[!t]
\includegraphics[width=\linewidth]{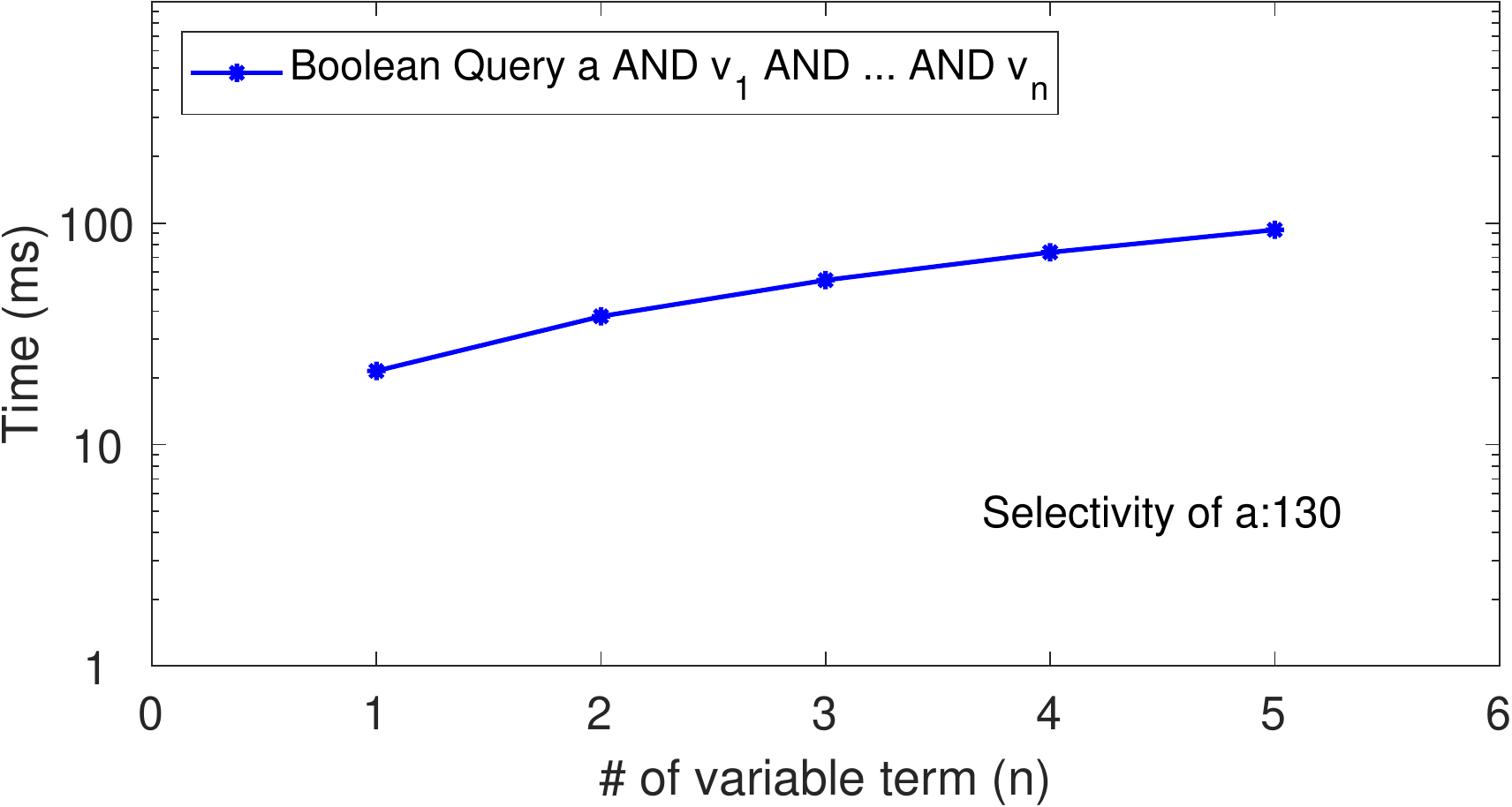}
\caption{Query delay for multiple-keyword set queries.}
\vspace{-10pt}
\label{fig:multikeywords}
\end{figure}
We further examine the query delay from set operations under multiple-keyword setting by using the same {\bf a} as s-term, but we add more variable terms $\{\mathbf{v_n}\}, \mathbf{n}\in[1,5]$ as x-term. Figure~\ref{fig:multikeywords} shows that each additional x-term increases the query delay by $20$ ms, which means a query with $5$ x-terms can still be processed within $100$ ms.

\begin{figure}[!t]
\centering
\subfloat[Addition]{
\label{subfig:addition}
\includegraphics[width=0.45\linewidth]{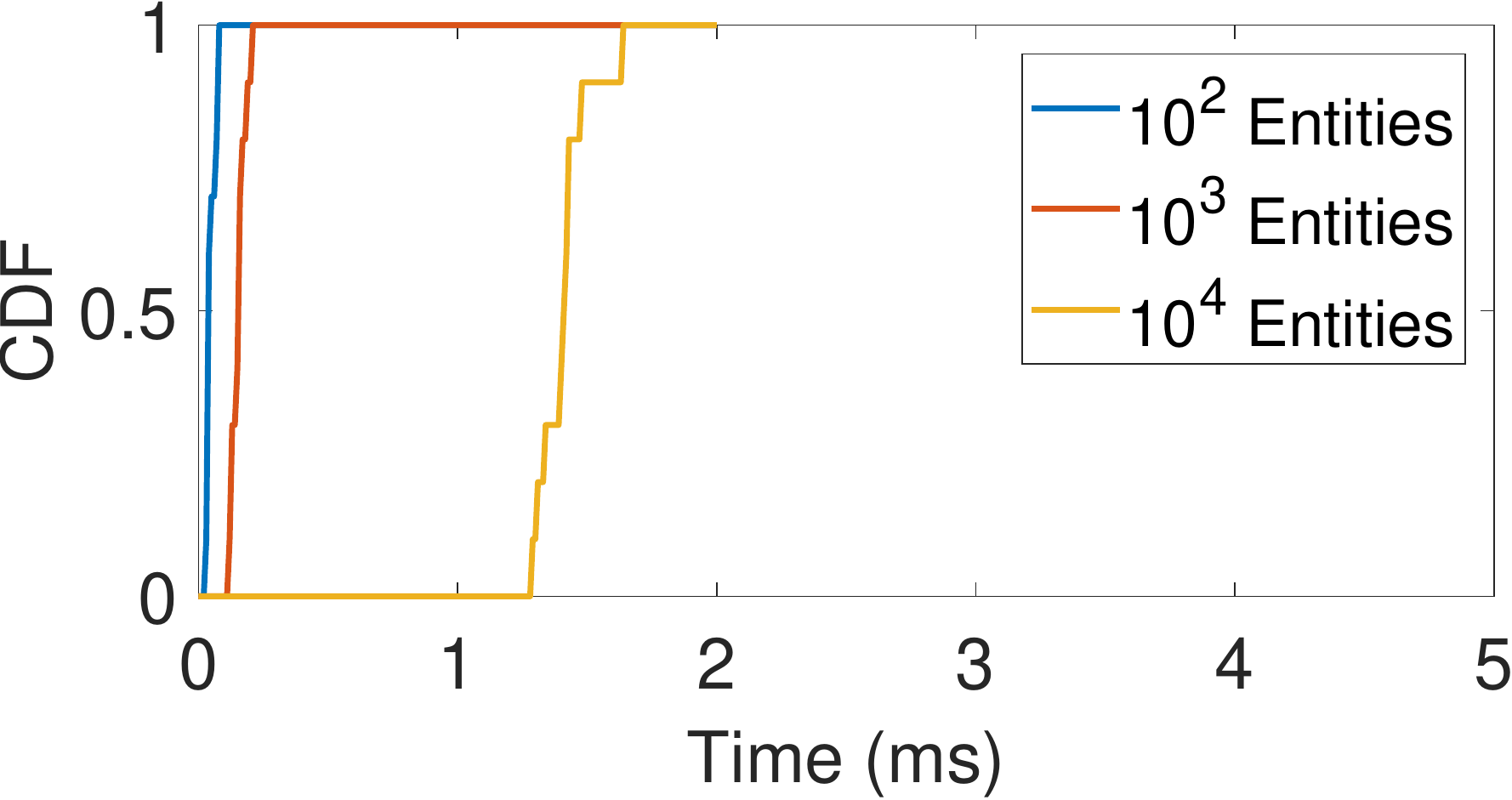}
}
\subfloat[Multiplication]{
\label{subfig:multiplication}
\includegraphics[width=0.45\linewidth]{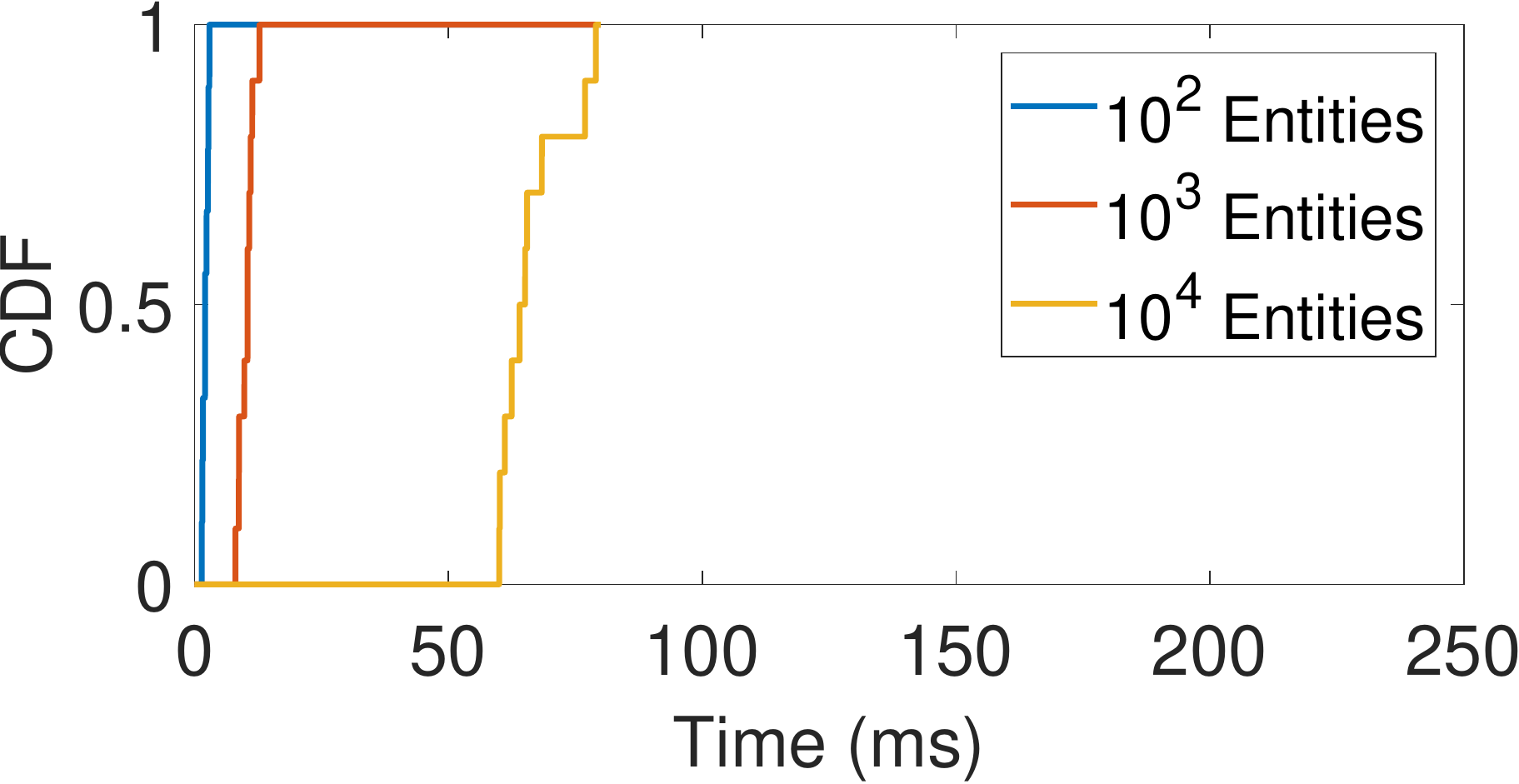}
}
\caption{The execution time for addition and multiplication operations on two vectors with $10^2$, $10^3$, $10^4$ entities.}
\vspace{-10pt}
\label{fig:additiveshare}
\end{figure}

\noindent{\bf Communication.} We measure the inter-cluster communication overhead, because it includes the main communication overhead in \system, which is to send the garbled sorting circuit as well as the labels of inputs to the counter-party.
Note that this is much larger than sending the multiplication triplets ($12$ bytes for each) and the encrypted $\id$ ($16$ bytes for each).
We demonstrate the communication overhead for different size of the circuit in Table~\ref{tlb:gc}. It shows that for an average user with approximately $130$ friends, the two-party only requires to transmits $9.80$MB data to sort them. 
This overhead is negligible both in our evaluation platform (40Gbps NIC in Azure intranet) and the other public clouds such as AWS.

\noindent{\bf Throughput.} To evaluate the impact of our system on throughput, we measure the server throughput for different types of operators. For each operator, we compare the throughput results between \system~and the baseline. Figure~\ref{fig:throughput} and Table~\ref{tlb:globalthroughput} show the throughput test results for \system~and baseline. 
In all the cases, we group our 6 VM instances into 2 clusters with 3 VMs to fulfil the two-party settings. All VMs are running with only one core involved in the computation, and we simulate $10000$ parallel client processes to send the query to the server, which ensures $100\%$ workload on the server side.
The results show that the throughput penalty is mainly from the sorting: the local sorting decreases the throughput by $38\%$ to $49\%$. We also observe that the global sorting is the bottleneck of the whole system (see Table~\ref{tlb:globalthroughput}), it gives a constant query throughput for all operators, which means it runs longer to obtain the final result. However, for the operators without sorting, the throughput loss is modest (only $4\%$ to $16\%$).

\begin{table}[!t]
\centering
	\caption{Throughput (Queries/sec) comparison of global sorting for query with different operators.}
	\label{tlb:globalthroughput}
	\begin{tabular}{|c|c|c|c|}
	\hline
	Operators & \textbf{term} & \textbf{and/diff.} & \textbf{or} \\
	\hline
	Baseline & $350$ & $321$ & $301$\\
	\hline
	Our system & $80$ &$80$& $80$\\
	\hline
	\end{tabular}
	\vspace{-10pt}
\end{table}
\begin{figure}[!t]
\includegraphics[width=\linewidth]{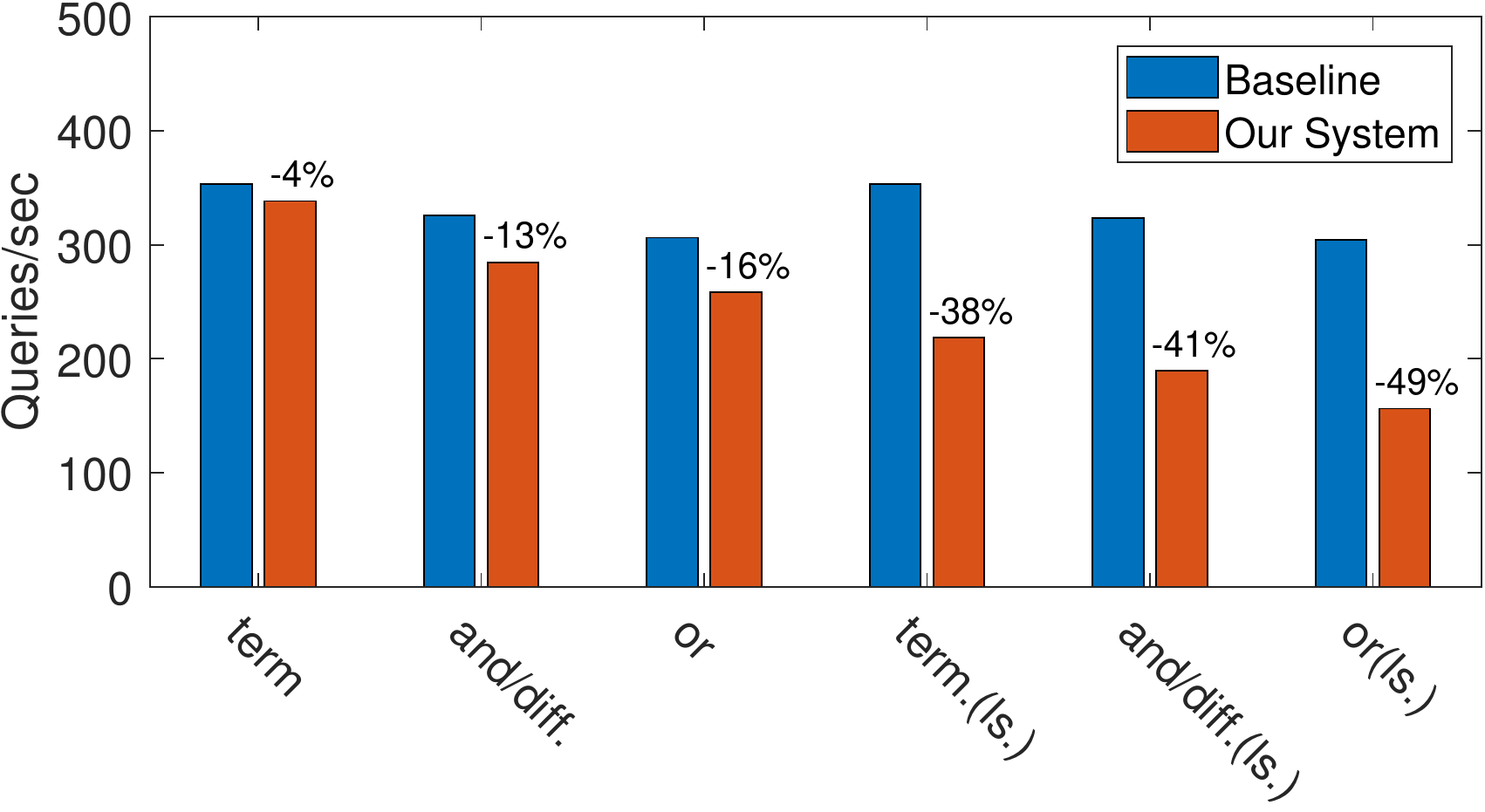}
\caption{Throughput of different types of Query Operators with $10000$ concurrent clients running under \system~and baseline, all operators except \textbf{term} have two keywords. Diff. stands for \textbf{difference} operator; ls. stands for locally sorted in $\IS$ after applying the operators.}
\vspace{-10pt}
\label{fig:throughput}
\end{figure}

\section{Conclusion}\label{sec:conclusion}
This paper presents an encrypted graph database, named \system. It enables privacy-preserving rich queries in the context of social network services. Our system leverages the advanced cryptographic primitives (i.e., $\OXT$, and mixing protocol with additive sharing and Garbled Circuit) with strong security guarantees for queries on structured social graph data, and queries with computation, respectively. To lead to a practical performance, \system~generates an encrypted index on a distributed graph model to facilitate parallel processing of the proposed graph queries. \system~is implemented as a prototype system, and our evaluation on YouTube dataset illustrates its efficiency on social search.

\bibliographystyle{ACM-Reference-Format}
\bibliography{reference}

\appendix

\section{Security Definition and Proof}
\subsection{Leakage Function}\label{subsec:leak}
Recall the security definition from~\cite{curtmola2011searchable,cash2013highly}: The security of SSE is parameterised by a leakage function $\Leakage_{\bf SSE}$, which depicts the scope of information about data and queries that the adversary is allowed to learn through the interaction with server.

Therefore, we start by giving the leakage function of query operators in \system.
As shown in Section~\ref{subsec:atomic}, all operators in \system~inherit the leakage of $\OXT$~\cite{cash2013highly} (i.e., $\TSet$ leakage $\Leakage_{\bf T}$ and $\OXT$ leakage $\Leakage_{\OXT}$).

Furthermore, all operators produce a sorted list as the return, it additionally introduces a new leakage about the rank of retrieved entities. We define a new leakage function $\Leakage_{\rm R}$ to capture this new leakage.
In particular, $\Leakage_{\rm R}$ consists of two sub-functions $\{f_0, f_1\}$ which are defined as follows:
\begin{itemize}[leftmargin=*]
\setlength{\itemsep}{0pt}
\setlength{\parsep}{0pt}
\setlength{\parskip}{0pt}
\item $f_0$: it takes as input the transcript of $\OXT$ and outputs a sorting circuit $F_{\it sort}$ and its input labels.
\item $f_1$: it takes as input the transcript of $\OXT$ and outputs the rank $r_{\id}$ for every $e_{\id}$, where $r_{\id}\in\mathcal{N}$ is the rank of $e_{\id}$.
\end{itemize}
Note that the adversary can only corrupt one of the two parties in our system which means the adversary only can access one of the above two sub-functions during the simulation.

The overall leakage function $\Leakage$ consists of the leakage from $\OXT$ as well as $\Leakage_{\rm R}$.

\subsection{Security Proofs}\label{subsec:proof}
Based on the above leakage function, we give a security analysis for \system~following the real/ideal paradigm. 

To start with, we define the real/ideal models for the query operators (e.g., {\bf term}, {\bf and}, etc.) that only involve structured data access and Garbled Circuit sorting, we denote the execution of the above query operators as query protocol $\Pi_1$.

The query protocol $\Pi_1$ is executed between a client $\Client$ ($\SF$) and two parties $\Party_i, i\in\{0,1\}$ ($\ISC$s).  
In real model, an adversary $\Adv$ can choose a social dataset $\DB$ and let $\Client$ generate the corresponding encrypted database $\EDB$ and give it to $\Adv$.
Then, $\Adv$ chooses a query list $\query$ to run in the two-party servers. To respond, each party firstly accesses $\EDB$ to retrieve contents and gives the transcript to $\Adv$.
Later, they perform two-party sorting via Garbled Circuit scheme. During the sorting, $\Adv$ is able to see all inbound/outbound messages in the corrupted party.

We let $\View^i_{\Pi_1}(1^{\lambda}, \DB, \query)$ be the entire view of $\Party_i$ in an execution of $\Pi_1$. Let $\Out^i_{\Pi_1}(1^{\lambda}, \DB, \query)$ be the outputs of party $\Party_i$ in the end of a $\Pi_1$ execution. 
Considering the security assumption of untrusted but non-colluded adversaries on the two-party setting, an adversary $\Adv$ can corrupt one party at most. The assumption restricts that $\Adv$ only can get the entire view of the corrupted party and the outputs from the counter-party.
Hence, $\View^i_{\Pi_1}(1^{\lambda}, \DB, \query)$ and $\Out^{(1-i)}_{\Pi_1}(1^{\lambda}, \DB, \query)$ are exactly the real model of $\Adv$ who corrupts $\Party_i$. In this case, we denote the adversary as $\Adv_i$ and set \\
$\Real_{\Pi_1, \Adv_i}(1^{\lambda}, \DB, \query)\overset{def}{=}$ \\
\hspace*{\fill}$(\View^i_{\Pi_1}(1^{\lambda}, \DB, \query), \Out^{(1-i)}_{\Pi_1}(1^{\lambda}, \DB, \query))$.

In the ideal model, the $\EDB$ of the chosen $\DB$ is generated by the simulator of $\OXT$ $\Sim_{\OXT}$. 
In the query phase, each party processes the chosen query list $\query$ and returns the transcript to $\Adv$ via $\Sim_{\OXT}$. Then, they hand their inputs to a trusted party $\mathit{TP}$ to perform sorting. 
We denote the above protocol executed in the ideal model as $\Pi_1^{'}$.
The view of $\Adv_i$ in the ideal model consists of the view on $\Party_i$ and the output of counter-party $\Party_{(1-i)}$. We set \\
$\Ideal_{\Pi_1^{'}, \Adv_i}(1^{\lambda}, \DB, \query)\overset{def}{=}$\\ 
\hspace*{\fill}$(\View^i_{\Pi_1^{'}}(1^{\lambda}, \DB, \query), \Out^{(1-i)}_{\Pi_1^{'}}(1^{\lambda}, \DB, \query))$.

Before we formalise the security of $\Pi_1$, we give the definition of {\it computationally indistinguishable}:
\begin{Definition}[Computationally Indistinguishable]
Assuming a distribution ensemble $\mathcal{X}=\{\mathcal{X}_i\}_{i\in\mathcal{I}}$ is a sequence of random variables indexed by $\mathcal{I}$.
Two distribution ensembles $\mathcal{X}=\{\mathcal{X}_i\}_{i\in\mathcal{I}}$ and $\mathcal{Y}=\{\mathcal{Y}_i\}_{i\in\mathcal{I}}$ are {\it computationally indistinguishable}, denoted as $\mathcal{X}\overset{c}{\equiv}\mathcal{Y}$, if for every probabilistic polynomial-time distinguisher $\mathcal{D}$, there exists a negligible function $\mathsf{negl}(\cdot)$, such that for every $i\in\mathcal{I}$\\
\vspace{-5pt}
$$|\mathrm{Pr}[\mathcal{D}(\mathcal{X}_i)=1] - \mathrm{Pr}[\mathcal{D}(\mathcal{Y}_i)=1]|\leq \mathsf{negl}(\lambda)$$.
\vspace{-15pt}
\end{Definition}

The security of $\Pi_1$ is defined as follows:
\begin{Definition}
Let $\Pi_1$, $\Pi_1^{'}$ be as above. protocol $\Pi_1$ is said to be $\Leakage$-semantically secure where $\Leakage$ is the leakage function defined as before if for every non-adaptive adversary $\Adv_i$ in the real model, there exists a simulator $\mathcal{S}_i$ in the ideal model such that \\
$\{\Real_{\Pi_1, \Adv_i}(1^{\lambda}, \DB, \query)\}_{i\in\{0,1\}}\overset{c}{\equiv}\{\Ideal_{\Pi_1^{'}, \mathcal{S}_i}(1^{\lambda}, \DB, \query)\}_{i\in\{0,1\}}$
\end{Definition}

We further define our adaptive model as in~\cite{cash2013highly}. In such a model, the query list $\query$ is not given to the challenger in the above two games. Instead, $\Adv$ adaptively chooses each query after receiving $\EDB$.

We assume that $\Pi_1$ runs in a {\it hybrid model} where parties are given access to the trusted party computing the ideal function of OT.
We show that $\Pi_1$ is secure with the given leakage function in this hybrid model. It follows from the standard composition theorems~\cite{canetti2000security} that $\Pi_1$ is secure with the given leakage function if the trusted party is replaced by secure protocol (i.e., real OT).
\begin{Theorem}
\label{thm:non-adaptive}
In the OT-hybrid model execution, protocol $\Pi_1$ is $\Leakage$-semantically secure against non-adaptive adversaries, assuming that $\OXT$ protocol is $\Leakage_{\OXT}$-semantically secure against non-adaptive adversaries, that the Garbled Circuit (GC) construction is secure against semi-honest but non-colluding adversaries.
\end{Theorem}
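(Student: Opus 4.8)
The plan is to prove the statement by the standard simulation paradigm: for each corruption $i \in \{0,1\}$ I would exhibit a simulator $\mathcal{S}_i$ whose output is computationally indistinguishable from the real view, building $\mathcal{S}_i$ by composing the two simulators the hypotheses provide, namely the $\OXT$ simulator $\Sim_{\OXT}$ guaranteed by $\Leakage_{\OXT}$-security and the garbling simulator guaranteed by the semi-honest security of the GC construction. Because $\Pi_1$ is run in the OT-hybrid model, the oblivious-transfer sub-steps are calls to an ideal OT functionality and need no separate simulation; this removes what would otherwise be the most delicate part of the argument and lets the sorting stage be treated purely at the garbling level.

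First I would describe $\mathcal{S}_i$ explicitly. Given the leakage $\Leakage(\DB, \query)$, the simulator runs $\Sim_{\OXT}$ on the $\OXT$ part of the leakage to produce a simulated $\EDB$ and simulated search transcripts for every query in $\query$; these determine, in particular, the simulated encrypted posting lists together with the additive shares of the sort-keys that feed the sorting circuit. The simulator then invokes the garbling simulator on $\Leakage_{\rm R}$, using the sub-function matched to the corrupted role: when $\Adv$ corrupts the garbler $\Party_0$, $\mathcal{S}_0$ uses $f_0$ to obtain the sorting circuit $F_{\it sort}$ and its input labels and thereby reconstruct the garbler's view; when $\Adv$ corrupts the evaluator $\Party_1$, $\mathcal{S}_1$ uses $f_1$ to obtain the rank vector the evaluator observes and outputs. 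In each case the output of the uncorrupted counter-party is supplied by the complementary sub-function of $\Leakage_{\rm R}$. The one-time mask $R$ applied by the final XOR gate is exactly what allows the evaluator's simulated view to depend only on the ranks and not on the underlying scores.

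Next I would establish indistinguishability via a short hybrid argument over the single joint distribution containing both the corrupted party's view and the counter-party's output. Hybrid $H_0$ is the real execution $\Real_{\Pi_1,\Adv_i}$. In $H_1$ I replace the honestly generated $\EDB$ and query transcripts by the outputs of $\Sim_{\OXT}$; here $H_0 \overset{c}{\equiv} H_1$ is immediate from the $\Leakage_{\OXT}$-security hypothesis, since the rest of the experiment is a polynomial-time post-processing of the $\OXT$ transcript. In $H_2$ I replace the real garbled circuit, encoded inputs, and decode table by the output of the garbling simulator; $H_1 \overset{c}{\equiv} H_2$ follows from semi-honest GC security in the OT-hybrid model, applied to the single circuit $F_{\it sort}$ and noting that the corrupted party's inputs are precisely the additive shares already fixed in $H_1$. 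Finally $H_2$ is distributed identically to $\Ideal_{\Pi_1',\mathcal{S}_i}$. Composing the two negligible gaps, and repeating for both $i=0$ and $i=1$, yields the required computational indistinguishability of the two ensembles.

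I expect the main obstacle to be the clean chaining of the two simulators: ensuring that the simulated additive shares emitted by $\Sim_{\OXT}$ are consistent, admissible inputs to the garbling simulator, and that the \emph{joint} distribution of the corrupted view together with the counter-party's output is reproduced rather than each marginal in isolation. The garbler/evaluator asymmetry means the two corruption cases are genuinely different and must be argued separately, since the garbler's view is dominated by $f_0$ (circuit and labels) while the evaluator's view is dominated by $f_1$ (ranks under the mask), and one must verify in each case that the uncorrupted party's output is computable from $\Leakage_{\rm R}$ alone. A secondary point is the appeal to the composition theorem of~\cite{canetti2000security} to replace ideal OT by a real OT protocol, which is the route the excerpt flags from the hybrid model back to the standard model.
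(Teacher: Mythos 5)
Your proposal is correct and follows essentially the same route as the paper's own proof: a per-corruption simulator $\mathcal{S}_i$ built by composing $\Sim_{\OXT}$ (for the $\EDB$ and query transcripts) with a GC simulator driven by the sub-functions $f_0$/$f_1$ of $\Leakage_{\rm R}$, with the garbler and evaluator cases argued separately and the joint distribution of corrupted view plus counter-party output matched in each case. Your explicit hybrid chain $H_0 \overset{c}{\equiv} H_1 \overset{c}{\equiv} H_2$ is just a cleaner formalization of the paper's final step, which enumerates the real/ideal differences and reduces them to $\OXT$ security and GC privacy in exactly the same way.
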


\begin{proof}
Let $\Adv_i$ denote an adversary corrupts $\Party_i$ attacking the protocol in a hybrid model where the parties run OT via trusted entities (OT-hybrid model).
As the view of garbler and evaluator are different in $\Pi_1$, we consider separately the cases $i = 0$ and $i = 1$. We further denote the ensemble of $\Client$, the counter-party of $\Party_i$ and the trusted party $\mathit{TP}$ as the challenger $\Clg$ in our simulation.
In both cases, we show that we can construct a simulator $\mathcal{S}_i$ running in our ideal model where the parties involve a trusted entity computing $\Pi_1^{'}$, which returns the same view as $\Adv_i$'s in the hybrid model.

\noindent$\mathcal{S}_i$ is constructed as follows:
\begin{enumerate}[leftmargin=*]
\setlength{\itemsep}{0pt}
\setlength{\parsep}{0pt}
\setlength{\parskip}{0pt}
\item $\mathcal{S}_i$ chooses the $\DB$ and $\query$ and simulates $\Adv$. 
It firstly gives $\DB$ to $\Clg$. $\Clg$ runs $\Sim_{\OXT}(\Leakage_{\OXT}(\DB))$ and return $\EDB$ to $\mathcal{S}_i$.

\item For the given query list $\query$, $\mathcal{S}_i$ processes $\Sim_{\OXT}(\Leakage_{\OXT}(\DB, q))$ and receives the output $\mathbf{tr}_i$ from $\Sim_{\OXT}$. 
The output $\mathbf{tr}_i$ is identical to the transcript of $\OXT$ except that the query result of $\OXT$ only includes the encrypted id $e_{\id}$, but in $\Pi_1^{'}$, the query result is a list of encrypted tuples $\{\mathbf{E}\}=\{(\langle x\rangle^A, e_{\id})\}$.

This completes the the simulation of the structured data access part of $\Pi_1$.

\item Next, $\mathcal{S}_i$ hands $\mathbf{tr}_i$ to $\Clg$ and simulates the sorting procedure. As each party runs the same query on the same partition of the $\OXT$ index, the size of transcript is equal. $\Clg$ runs $f_0(\mathbf{tr}_0,\mathbf{tr}_1)$ to generate an array of sorting circuits $\mathbf{F}$ with $|\mathbf{tr}|$ circuits and $f_1(\mathbf{tr}_0,\mathbf{tr}_1)$ to generate an array of ranks $\mathbf{R}[e_\id]$ indexing by the $e_\id$ from $\mathbf{tr}$.

\item For each $\mathbf{tr}_i[j], 1\leq j\leq |\mathbf{tr}|$, if $i=0$ (garbler):
\begin{enumerate}[leftmargin=*]
\setlength{\itemsep}{0pt}
\setlength{\parsep}{0pt}
\setlength{\parskip}{0pt}
\item $\mathcal{S}_0$ sends $\mathbf{tr}_0[j]$ to the $\Clg$ and receives $\mathbf{F}[j]$ and its input labels $\langle\mathrm{X}\rangle_0^Y$ from $\Clg$. In addition, $\mathcal{S}_0$ is able to get the labels $\langle\mathrm{X}_0\rangle_1^Y$ of its additive shares $\langle \mathrm{X}\rangle_0^A$. There is no message from the counter-party. 

\item The counter-party ($\Party_1$) uses its $\mathbf{tr}_1[j]$ to retrieve a $\Leakage_{\rm R}$-simulated ranked list $\{\bar{\mathbf{E}}\}=\{(r_{\id} ,e_{\id})\}$ from $\mathbf{R}[e_\id]$.

\item After execution, $\mathcal{S}_0$ outputs $\{\EDB, \mathbf{tr}_0[j], \mathbf{F}[j], \langle\mathrm{X}\rangle_0^Y, \langle\mathrm{X}_0\rangle_1^Y\}$ as $\View^0_{\Pi_1^{'}}(1^{\lambda}, \DB, \query)$ and $\{\{\bar{\mathbf{E}}\}\}$ as $\Out^1_{\Pi_1^{'}}(1^{\lambda}, \DB, \query)$.
\end{enumerate}

\item If $i=1$ (evaluator):
\begin{enumerate}[leftmargin=*]
\setlength{\itemsep}{0pt}
\setlength{\parsep}{0pt}
\setlength{\parskip}{0pt}
\item $\mathcal{S}_1$ simulates OT protocol by generating the labels $\langle\mathrm{X}_1\rangle_1^Y$ for each evaluator's input $\langle x\rangle_1^A$.

\item $\mathcal{S}_1$ retrieves ranks from $\mathbf{R}[e_\id]$ and constructs $\{\bar{\mathbf{E}}\}$.

\item Then, in the usual way (e.g.~\cite{lindell2009proof, bellare2012foundations}), $\mathcal{S}_1$ simulates a sorting circuit $F_{\it sort}$ in such a way that given the input labels chosen by $\mathcal{S}_1$, the output of the circuit is precisely $\{\bar{\mathbf{E}}\}$.

\item Finally, $\mathcal{S}_1$ outputs $\{\EDB, \mathbf{tr}_1[j], F_{sort}, \langle\mathrm{X}_0\rangle_1^Y, \langle\mathrm{X}_1\rangle_1^Y, \{\bar{\mathbf{E}}\}\}$ as $\View^1_{\Pi_1^{'}}(1^{\lambda}, \DB, \query)$ and $\{\emptyset\}$ as $\Out^0_{\Pi_1^{'}}(1^{\lambda}, \DB, \query)$.
\end{enumerate}
\end{enumerate}

To complete the proof, we use the above simulator to show that\\
$\{\Real_{\Pi_1, \Adv_i}(1^{\lambda}, \DB, \query)\}_{i\in\{0,1\}}\overset{c}{\equiv}\{\Ideal_{\Pi_1^{'}, \mathcal{S}_i}(1^{\lambda}, \DB, \query)\}_{i\in\{0,1\}}$:

\begin{enumerate}[leftmargin=*]
\setlength{\itemsep}{0pt}
\setlength{\parsep}{0pt}
\setlength{\parskip}{0pt}
\item $\EDB$ and $\mathbf{tr}$ in the real model are generated by running $\OXT$ protocol, while in the ideal model, they are simulated by given the output of leakage function.
\item For $i=0$ (garbler):
\begin{enumerate}[leftmargin=*]
\setlength{\itemsep}{0pt}
\setlength{\parsep}{0pt}
\setlength{\parskip}{0pt}
\item In the real model, the garbler constructs the sorting circuit correctly based on garbler's inputs, while in the ideal model the garbled circuit is a dummy circuit in $\mathbf{F}$ with the given input labels from $\Clg$.
\item In the real model, the evaluator gets the output $\{\bar{\mathbf{E}}\}$ after evaluating the circuit. In the ideal model, the output is
obtained from the trusted entity via the output of leakage function $\Leakage_{\rm R}$.
\end{enumerate}

\item For $i=1$ (evaluator):
\begin{enumerate}[leftmargin=*]
\setlength{\itemsep}{0pt}
\setlength{\parsep}{0pt}
\setlength{\parskip}{0pt}
\item In the real model, $F_{\it sort}$ and the input label sent by the garbler is correctly generated based on garbler's inputs, while in the ideal model the garbled circuit is simulated based on the random generated input labels and the output of leakage function $\Leakage_{\rm R}$ from $\Clg$. It ensures that the circuit returns the same output in two models.
\item Both in the real and ideal model, the garbler does not output a result in the end of the protocol execution.
\end{enumerate}
\end{enumerate}

The security properties of $\OXT$ ensures the indistinguishability of structured data access part. 
For the sorting part, The privacy property of Garbled Circuit~\cite{bellare2012foundations} ensures that the adversary only can learn the inputs from the circuit and output with a negligible probability.
It concludes that for every non-adaptive adversary $\Adv_i$, it has negligible probability to learn more information than the defined leakage function $\Leakage$. 
\end{proof}

We now show that our theorem is also valid for adaptive models.
\begin{Theorem}
\label{thm:adaptive}
In the OT-hybrid model execution, protocol $\Pi_1$ is $\Leakage$-semantically secure against adaptive adversaries, assuming that $\OXT$ protocol is $\Leakage_{\OXT}$-semantically secure against adaptive adversaries, that the Garbled Circuit (GC) construction is secure against semi-honest but non-colluding adversaries.
\end{Theorem}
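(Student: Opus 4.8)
The plan is to obtain Theorem~\ref{thm:adaptive} by directly lifting the simulator constructed for Theorem~\ref{thm:non-adaptive}, now run against an adversary that interleaves query selection with transcript observation rather than committing to a fixed list $\query$ in advance. First I would retain the case split on the corrupted party ($i=0$ the garbler, $i=1$ the evaluator) and the same challenger $\Clg$ that bundles $\Client$, the counter-party, and the trusted entity computing $\Pi_1^{'}$. The only setup change is that $\Clg$ now invokes the \emph{adaptive} $\OXT$ simulator: after $\mathcal{S}_i$ commits to $\DB$, $\Clg$ returns $\EDB\leftarrow\Sim_{\OXT}(\Leakage_{\OXT}(\DB))$, and thereafter each query arriving from the adaptive adversary is serviced on demand. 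Because the assumption of Theorem~\ref{thm:adaptive} upgrades $\OXT$ to $\Leakage_{\OXT}$-security against adaptive adversaries, the structured-data-access transcripts $\mathbf{tr}_i$ remain indistinguishable even when later queries depend on earlier responses.

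The key structural observation that makes the lift routine is that $\Pi_1$ processes queries one at a time, each with its own freshly garbled sorting circuit, and the entire leakage of a single query---both the $\OXT$ component and the rank component $\Leakage_{\rm R}=\{f_0,f_1\}$---is fully determined the instant that query is fixed. Hence when $\mathcal{S}_i$ receives an adaptively chosen query, it can immediately call $\Sim_{\OXT}(\Leakage_{\OXT}(\DB,q))$ for $\mathbf{tr}_i$ and then program the fake sorting circuit exactly as in Steps~4--5 of the earlier proof: for $i=1$ it reads the target ranked list $\{\bar{\mathbf{E}}\}$ off $f_1$ and builds $F_{\it sort}$ whose evaluation yields precisely $\{\bar{\mathbf{E}}\}$, while for $i=0$ it forwards $\mathbf{tr}_0$ and receives the dummy circuit and labels from $\Clg$. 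Since the target output of every garbled instance is known \emph{before} that instance is simulated, no circuit is ever fabricated ``blind,'' so the ordinary privacy property of the garbling scheme (as used for Theorem~\ref{thm:non-adaptive}) suffices and no equivocable or programmable-random-oracle garbling is needed.

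The hard part, and the step I would argue most carefully, is precisely this point: one must recognise that the adaptivity in $\Pi_1$ is \emph{inter-query} (a fresh circuit per query) rather than \emph{intra-circuit} (inputs chosen after the circuit is committed), so the notorious adaptive-garbling obstacle does not arise and plain GC privacy is enough. With that in hand I would finish via a standard hybrid over the adaptively chosen query sequence, replacing the real execution by the simulated one query by query; at each swap the data-access portion is indistinguishable by adaptive $\Leakage_{\OXT}$-security and the sorting portion by GC privacy on that single query's circuit, and the fact that a distinguisher's subsequent queries may depend on earlier transcripts is absorbed by the adaptive $\OXT$ guarantee, which already quantifies over such adversaries. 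Finally, as the whole argument lives in the OT-hybrid model, I would invoke the composition theorem of~\cite{canetti2000security} to replace the ideal OT functionality with a secure OT protocol, concluding that $\{\Real_{\Pi_1,\Adv_i}(1^{\lambda},\DB,\query)\}_{i}\overset{c}{\equiv}\{\Ideal_{\Pi_1^{'},\mathcal{S}_i}(1^{\lambda},\DB,\query)\}_{i}$ in the adaptive setting.
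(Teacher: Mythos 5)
Your proposal is correct and follows essentially the same route as the paper: both rest on the adaptive security of $\OXT$ from~\cite{cash2013highly} together with the observation that sorting can be simulated query-by-query, programming a fresh fake circuit whose output is dictated by the per-query rank leakage $\Leakage_{\rm R}$, so that plain GC privacy suffices. The only real difference is presentational: the paper makes cross-query consistency of the simulated ranks explicit by having the simulator maintain a stateful bidimensional array $\mathbf{R}[STag, e_{\id}]$ updated only for new identifiers, whereas you obtain the same consistency implicitly from the fact that the deterministic leakage of each query is fixed the moment that query is chosen.
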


\begin{proof}
First of all, $\OXT$ protocol has been proved to be $\Leakage_{\OXT}$-semantically secure against adaptive adversaries~\cite{cash2013highly}. The only concern here is how to handle the adaptivity towards sorting.

To simulate the response for adaptive queries, the simulator should adaptively generate the ranks $\mathbf{R}[e_\id]$ for $e_\id$ list, This is in contrast to the non-adaptive simulator, where it can generates $\mathbf{R}[e_\id]$ as determine by the leakage. 
Instead, the simulator uses the STag and $e_\id$ list from the transcript to maintain a bidimensional array $\mathbf{R}[STag, e_\id]$. It adaptively updates $\mathbf{R}[STag, e_\id]$ for each new $e_\id$ which does not exist in $\mathbf{R}[STag, e_\id]$.
\end{proof}

Finally, we show that the defined operator (i.e., {\bf apply}) is also secure after involving arithmetic computations. We slightly modify our hybrid model by adding an ideal function $f_A$, which evaluates arbitrary addition/multiplications using the ideal function of additive sharing scheme. The new hybrid model is defined as ($f_A$,OT)-hybrid model.
 \begin{Theorem}
In the ($f_A$,OT)-hybrid model execution, protocol $\Pi_1$ is $\Leakage$-semantically secure against adaptive adversaries, assuming that $\OXT$ protocol is $\Leakage_{\OXT}$-semantically secure against adaptive adversaries, that the Garbled Circuit (GC) construction is secure against semi-honest but non-colluding adversaries.
\end{Theorem}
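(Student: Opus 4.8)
The plan is to extend the adaptive simulator $\mathcal{S}_i$ constructed in the proof of Theorem~\ref{thm:adaptive} by inserting a simulation step for the arithmetic computations that the \textbf{apply} operator performs between structured data access and sorting. Since we now work in the $(f_A,\mathrm{OT})$-hybrid model, the additions and multiplications are handled by a trusted entity computing $f_A$, so the simulator never has to reproduce the internal multiplication-triplet exchange; it only has to simulate what a corrupted $\Party_i$ observes at the interface of $f_A$. By the security of the additive sharing scheme (Section~\ref{subsec:atomic}), every share $\langle x\rangle_i^A$ that $\Party_i$ holds, together with every intermediate share revealed during reconstruction, is uniformly distributed over $\mathbb{Z}_{2^l}$ and independent of the underlying value. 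Hence $\mathcal{S}_i$ can draw all of these shares uniformly at random, which is the essential building block.

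First I would recall that the \textbf{apply} operator is a bounded sequence of rounds: each round issues an $\OXT$ query (realised by the \textsc{BooleanQuery} machinery), feeds the retrieved tuples $\{\mathbf{E}\}=\{(\langle\text{sort-key}\rangle^A,e_{\id})\}$ into $f_A$ to obtain score shares, and then passes those shares into the local/global sorting circuits. I would therefore treat $\Pi_1$ for \textbf{apply} as the sequential composition of three already-simulated pieces: $\OXT$ (adaptively simulated via $\Sim_{\OXT}$), $f_A$ (simulated by uniform shares), and the garbled sorting circuit (simulated as in Theorems~\ref{thm:non-adaptive}--\ref{thm:adaptive}). Concretely, $\mathcal{S}_i$ runs exactly as in the adaptive proof to obtain $\mathbf{tr}_i$ and the $\Leakage_{\rm R}$-simulated ranked list $\{\bar{\mathbf{E}}\}$; immediately before the sorting step it additionally emits freshly sampled uniform shares as the transcript of the $f_A$ interface, and it wires the output shares of $f_A$ to the sorting-circuit input labels in the same way the earlier simulator wired the $\langle\mathrm{X}\rangle^A$ inputs.

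Next I would invoke the standard composition theorem~\cite{canetti2000security}. Because the arithmetic sub-protocol is secure against semi-honest, non-colluding adversaries (its real/ideal indistinguishability follows from the uniformity of additive shares), the ideal call to $f_A$ in the hybrid model may be replaced by the real offline-triplet protocol of Section~\ref{subsec:atomic} with only negligible change to the adversary's view, while the already-established replacement of ideal OT by real OT is unaffected. Adaptivity is handled exactly as in Theorem~\ref{thm:adaptive}: the per-round outputs (the simulated $e_{\id}$ lists and their ranks) that determine the next-round \textbf{apply} query are produced on the fly from the STag-indexed array $\mathbf{R}[STag,e_{\id}]$, and the uniform $f_A$ shares are likewise generated lazily as each round is issued.

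The main obstacle I anticipate is not any single primitive but the \emph{chaining across rounds} of \textbf{apply} together with the \emph{interface consistency} between $f_A$ and the sorting circuit. The output score shares produced by the $f_A$ simulation must be distributed identically to what feeds the sorting circuit in the real protocol, while the circuit is simulated top-down from the target ranked output $\{\bar{\mathbf{E}}\}$ dictated by $\Leakage_{\rm R}$; I must argue that sampling the $f_A$ output shares uniformly and independently is consistent with that top-down garbled-circuit simulation, i.e. that no joint distribution over the arithmetic shares, the circuit input labels, and the masked output is pinned down in a way the uniform sampling would violate. This reduces to verifying that the random mask $R$ applied before the final XOR gate in local sorting decorrelates the circuit inputs from the revealed rank, so that the two simulations compose. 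Once this decorrelation is established for a single round, a straightforward hybrid argument over the bounded number of \textbf{apply} rounds closes the gap.
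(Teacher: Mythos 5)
Your proposal is correct and follows essentially the same route as the paper's own proof: reuse the simulators from the non-adaptive and adaptive theorems (which, crucially, never needed the actual additive shares to produce the ranked outputs), simulate the $f_A$ interface using the uniformity of additive sharing~\cite{pullonen2012design}, handle adaptivity exactly as before, and close with the standard sequential modular composition theorem~\cite{canetti2000security}. The paper's argument is far terser than yours and does not spell out the interface-consistency and mask-decorrelation point you raise, but the decomposition and the key ingredients are identical.
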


\begin{proof}
We have shown that the output ranks can be simulated without using the actual additive shares in Theorem~\ref{thm:non-adaptive},~\ref{thm:adaptive}. 
Informally, it also indicates that $\Pi_1$ can also securely perform structured data access and sorting under the ($f_A$,OT)-hybrid model.
In addition, previous work shows that the additive sharing scheme can securely compute the given arithmetic formulas~\cite{pullonen2012design}.
The standard sequential modular composition~\cite{canetti2000security} implies that $\Pi_1$ with sub-protocols evaluating $f_A$ and OT remains secure with the same leakage $\Leakage$ in the real model.
\end{proof}

\balance

\end{document}